\newtheorem{theorem}{Theorem}
\newtheorem{definition}{Definition}
\newtheorem{lemma}{Lemma}
\newtheorem{ax}{Axiom}
\newtheorem{axprime}[ax]{Axiom*}
\newenvironment{lproof}[1]{\noindent \emph{Proof of Lemma \phantomsection\ref{#1}.}}{\hfill$\blacksquare$ \\}
\def \R{\mathbb{R}}
\def \s{\operatorname{\succsim}}
\def \w{\omega}
\def \S{\mathcal{S}}
\def \L{\mathcal{L}}
\def \M{\mathcal{M}}
\def \C{\mathcal{C}}
\def \V{\mathcal{V}}
\def \U{\mathcal{U}}
\def \Rn{\mathcal{R}}
\def\A{\mathbb{A}}
\def \F{\mathcal{F}}
\def\d{\mathbf{do}}
\def\at{\mathcal{A}(\S)}
\def\u{\vec{u}\hspace{.1ex}}
\def\s{\succsim}
\def\fixes{\mathrel{\sim\!\succ}}
\newcommand{\pair}[1]{\langle\vec{#1},\vec{#1}^{\prime}\rangle}
\newcommand{\nciteyear}[1]{[\citeyear{#1}]}
\newcommand{\vecc}[1]{\vec{#1}\hspace{.15em}}
\newcommand{\svc}[1]{{\vec{\uppercase{#1}} \leftarrow \vecc{\lowercase{#1}}}}
\newcommand{\ms}[1]{\{\!\{#1\}\!\}}
\def\p{\textup{\textbf{p}}}
\renewcommand{\mu}{\textup{\textbf{u}}}
\newcommand{\commentout}[1]{}
\newcommand{\fullv}[1]{#1}
\newcommand{\shortv}{\commentout}
\title{Subjective Causality}
\author{
Joseph Y. Halpern$^1$
\and
Evan Piermont$^2$
\affiliations
$^1$Computer Science Department, Cornell University, Ithaca, USA.\\
$^2$Department of Economics, Royal Holloway, University of London, UK.
\emails
halpern@cs.cornell.edu,
evan.piermont@rhul.ac.uk
}
\begin{document}

\maketitle

\begin{abstract}
    We show that it is possible
  to understand and identify a decision maker's subjective causal
judgements by observing her preferences over interventions.
%joe6
%In the tradition of
Following
Pearl \nciteyear{pearl:2k}, we represent causality using  \emph{causal
models} (also called \emph{structural equations models}), where the world is described by a collection of variables,
related by equations. We show that if a preference relation over interventions satisfies certain axioms (related to standard axioms regarding
counterfactuals), then we can define (i) a causal model, (ii) a probability capturing the decision-maker's uncertainty regarding the
external factors in the world and (iii) a utility on outcomes such that each intervention is associated with an expected utility and
such that intervention $A$ is preferred to $B$ iff the expected
%joe6
%utility of $A$ is greater than that of $B$. In addition, we show when
utility of $A$ is greater than that of $B$. In addition, we characterize when
the causal model is unique.  
Thus, our results allow a modeler to test the hypothesis that a
decision maker's preferences are consistent with some causal model and
%joe6: we shouldn't stress the "economic" here (even if we submit to
%an econ journal)
%to identify causal judgements from observed economic behavior.
to identify causal judgements from observed behavior. 
\end{abstract}

\section{Introduction}

Causal judgments play an important role in decision making. When
deciding between actions that intervene directly on some aspect of the
world, one major source of uncertainty is the indirect effect of such
actions via causal interaction. For example, when deciding the
interest rate, the Federal Reserve might consider the possibility that
a change in the interest rate will cause a change in unemployment, and
further that this causal relationship itself might be contingent on
other macroeconomic variables.  

%joe5: added some key references (pearl:2k
%As evinced by the enormous effort put forth, across many disciplines,
%to uncover and defend causal inference, the causal relationships
%between variables are not self-evident.\footnote{See for example,
%\cite{parascandola2001causation,plowright2008causal,Pearl09,angrist2009mostly,cunningham2021causal}.} 
Uncovering and describing the causal relationship between variables is
a task that 
has led to enormous effort across many different disciplines (see,
e.g., 
\cite{angrist2009mostly,cunningham2021causal,HR20a,MW07,parascandola2001causation,plowright2008causal,Pearl09,pearl:2k,SpirtesSG};
this list barely scratches the surface.) 
%joe5
%
%As such, different decision makers, on account of their private
Different decision makers, on account of their private
information and personal experience, might hold different beliefs
about causal relationships. In this paper, we show that it is possible
to understand and identify a decision maker's subjective causal
judgements by observing her preferences over interventions.

%joe5:
A first step to doing this is to decide how to represent causality.
Most recent work has focused on using counterfactuals.  In the
philosophy community, following Stalnaker \nciteyear{Stalnaker68} and
% evan5:
% Lewis \nciteyear{Lewis73}, the idea to use possible worlds equipped
Lewis \nciteyear{Lewis73}, counterfactuals are given semantics using
possible worlds equipped 
with a ``closer than'' relation; a counterfactual such as ``If $\phi$
were true then $\psi$ would be true'' is true at a world $\omega$ if,
$\psi$ is true at the closest world(s) to $\omega$ where $\phi$ is
true.  Pearl \nciteyear{pearl:2k} has championed the use of \emph{causal
models} (also called \emph{structural equations models}, graphical
models where the world is described by a collection of variables,
related by equations.  (These are related to models of causality in
economics that go back to the work of Haavelmo \nciteyear{Haa43} and
Simon \nciteyear{Simon53}.)
%joe6
The equations model the effect of counterfactual interventions.

We use the latter approach here, although as we we shall show, there
are close relationships with the former approach as well.  Following
Pearl, 
%joe5: it's not "our approach"
%Our approach assumes that the world is described by a set of variables. 
we assume that the world is described by a set of variables. 
It is useful to split them into two sets: the \emph{exogenous} variables,
whose values are determined by factors outside the model, and the
\emph{endogenous} variables, whose values are determined by the
exogenous variables and other endogenous variables.  Which variables
should be taken as exogenous and which should be taken as endogenous
depends on the situation.  For example, if
the Federal Reserve is deciding whether to change the interest rates,
the interest rates should clearly be viewed as endogenous.  But for a
company trying to decide whether to go ahead with a project that
involves borrowing money at the current interest rates, the interest
rates are perhaps best viewed as exogenous.  
%In any case, because the
% values of the exogenous variables are taken as given, we assume that
% we can intervene only on endogenous variables.

A primitive action is an intervention that sets the value of a
particular variable; because the 
values of the exogenous variables are taken as given, we assume that
only on endogenous variables can be intervened on.
%joe5
Following Pearl \nciteyear{pearl:2k}, we use the $\d$ notation to
denote such actions.
For example $\d[Y\gets y]$ is the primitive action that sets variable
$Y$ to value $y$.
Following Blume, Easley, and Halpern \nciteyear{BEH06} (BEH from now
on), we allow more general actions to be formed from these primitive
actions using {\bf if} \ldots {\bf then} \ldots {\bf else}.  Thus,
%evan5
% enon-primitive actions are conditional interventions of the form
non-primitive actions are conditional interventions of the form
$$ \textbf{if } \phi  \textbf{ then } A \textbf{ else } B$$,
where $A$ and $B$ are themselves (possibly primitive) actions and
$\phi$ is a test---a statement regarding the values of variables that
is either true or false. 

%joe5
%The primitive of our model is a preference relation $\s$ over actions,
As is standard in decision-theoretic analyses, we assume that the
decision maker has a preference relation $\s$ over actions.
%joe5: cut; essentially said this above
\commentout{
In the tradition of \cite{Pearl95}, we model causal relationships
through a set of \emph{structural equations}. Loosely speaking, a
\emph{causal model} is a set of structural equations that determine
the value of each endogenous variable, given the values of other
variables. We focus on \emph{recursive} causal models, precluding
cyclic relationships of dependence between the variables. From the
values of exogenous variables, a recursive causal model completely
(and uniquely) determines the values of all other variables, as well
as the effect of any intervention such as $\d[Y \gets y]$. 
}
%joe5
%We characterize the class of preferences over interventions that can
%be represented by some causal model and further show when this model
%is unique.
%evan5*: I think we should not use "causal model" below for the 
%def of representation, since we also need prob and utility
%i have slightly rewritten 
% We show that if the preference relation satisfies certain axioms (that
% can be understood as corresponding to standard axioms regarding
% counterfactuals), then we can represent the decision-maker's
% preference using a causal model. 
We show that if the preference relation satisfies certain axioms (that
can be understood as corresponding to standard axioms regarding
counterfactuals), then we can represent the decision-maker's
preference as the maximization of the expected utility of an actions
%joe6
%relative to a causal model.
relative to a causal model equipped with a probability and utility.
Specifically, given a preference
relation, we can define a causal model, add a probability on
\emph{contexts} (settings of the exogenous variables, which can be
viewed as capturing the decision-maker's uncertainty regarding the
external factors in the world), and a utility on outcomes (which are
%evan5
%just settings of the variables in the model).  In such such a causal
 just settings of the variables in the model).  In such a causal
model, we can determine the expected utility of an action that
involves (conditional) interventions.  
% We show that the decision maker
% prefers action $A$ to $B$ iff the expected utility of $A$ in this
% causal model is greater than that of $B$.  
We show that the decision maker
prefers action $A$ to $B$ iff the expected utility of $A$ is greater than that of $B$, 
given the causal model, probability, and utility.

Our results allow a modeler to test the hypothesis that a
decision maker's preferences are consistent with some causal model. In
doing so, we provide a definition of \emph{causally sophisticated}
%joe5: added what I suspect is the right reference
%behavior\footnote{In analogy with \emph{probabilistic
%%sophistication}.} as a benchmark of rationality regarding decision
%making over interventions. 
behavior as a benchmark of rationality for decision
making in the presence of interventions.%
\footnote{This is somewhat analogous to \emph{probabilistic
sophistication} in decision making \cite{MS92}.}
When a decision maker is causally
sophisticated---when her actions can be rationalized by some causal
model---our results further determine when the causal model can be
uniquely identified. This result provides a modeler the means to
%joe5: why "aggregate"
%aggregate causal judgements by observing economic behavior.
explain observed economic behavior in terms of causal judgments.

%joe5: already said this
\commentout{
Our main result is an axiomatic characterization of \emph{subjective
causal expected utility}, wherein $\s$ results from 
maximizing the expected utility of an action, relative
to uncertainty regarding the values of exogenous
variables and the structural equations that dictate causal relationships. Specifically, the representation is governed by 
\begin{itemize}
    \item $M$: a model  that dictates the causal equations
    \item $\p$: a probability distribution over values for exogenous variables
    \item $\mu$: a utility function over final values of all variables (i.e., the values that result from the action)
\end{itemize}

Then $(M,\p,\mu)$ represents $\s$ if $A \s B$ if and only if

\begin{equation*}
\sum_{\u \in \C} \mu(\beta^M_A(\u))\p(\u) \geq \sum_{\u \in
  \C} \mu(\beta^M_B(\u))\p(\u). 
\end{equation*}
Where $\C$ is the set of all possible initial values for exogenous variables, and for each $\u \in \C$, $\beta^M_A(\u)$ is the ex-post values of (both exogenous and endogenous) variables that results from taking action $A$, given $\u$, according to the model $M$.\footnote{See Section \ref{sec:rep} for a formal definition of $\beta$.}
}
%joe5: \end{commentout}

%joe5
%To the best of our knowledge, we are the first decision theoretic
%investigation of  structural equations as the determinants of
%subjective causality. 
%evan5:
% To the best of our knowledge, we are the examine causal decision
To the best of our knowledge, we are the first to examine causal decision
making in the context of the structural equations, which is perhaps
now the most common approach to representing causality in the social sciences
and computer science.
%Nonetheless, there is a small literature regarding the behavior of
%causal reasoning using other approaches: 
%\cite{BH21} consider a closest counterfactual worlds approach, in the
%consider a closest counterfactual worlds approach, in the 
%tradition of \cite{Lewis73} and \cite{Stalnaker68};
Bjorndahl and Halpern \nciteyear{BH21} (BH from now on) addressed
similar questions in 
%evan5
%the context of the closest-world appraoch to counterfactuals of Lewis
 the context of the closest-world approach to counterfactuals of Lewis
\nciteyear{Lewis73} and Stalnaker \nciteyear{Stalnaker68}.
Interestingly, our technical results use their results (and 
earlier results of BEH) as the basis for our representation theorem,
based on a connection between the two representations of
counterfactuals due to Halpern \nciteyear{Hal40}.
Our results allow us to relate the two approaches at a
decision-theoretic level.

%joe5
Other approaches to modeling causality have also been considered in the
literature.  
%\cite{schenone2020causality,ellis2021subjective} take a statistical
Schenone \nciteyear{schenone2020causality} and Ellis and Thysenb
\nciteyear{ellis2021subjective} take a statistical 
approach, taking a lack of conditional independence as a definition of
%joe5
%causality; \cite{alexander2023subjective} understand causality through
causality; Alexander and Gilboa \nciteyear{alexander2023subjective}
understand causality through 
a reduction in the Kolmogorov complexity. 
%joe5
%Our approach is closer to the notion of causality often employed in
%the social sciences where formal models play the role of structural
%equations.
%By identifying the structural equations, we provide a more detailed
%view into specific causal mechanisms. 
As we said, our approach is closer to the approach that is currently
the focus of work in the social sciences and computer science.
%evan5: cut, said again below
% By identifying the structural equations, we provide a more detailed
% view into specific causal mechanisms. 
Importantly, by identifying the structural equations, we provide a
more detailed insight into the causal mechanisms being considered by
the decision maker.

% Like us, \cite{schenone2020causality} considers a world determined by the values of variables, and studies a decision maker's preference over interventions. His primitive, in keeping with the Savage framework, is a preference over acts that yield utilities as determined by these variables; that is he considers a preference over bets on the effect of interventions. 

%joe5
%Notice, in our representation, both utility and probability are
In our representation (like that of BH), both utility and probability are
defined on valuations of variables. 
So, in contrast to the decision-theoretic
tradition following Savage (and, of particular relevance,
\cite{schenone2020causality}), there is no separation between 
states (on which probability is defined) and outcomes
(on which utilities are defined): our decision maker derives utility directly from the outcome of the intervention.
This permits the application of our model to the many economically
relevant contexts where the effect of the intervention has direct
%joe5
%utilitarian consequences for the decision maker (i.e., a government
%setting policy, or a firm deciding an investment strategy.)
utilitarian consequences for the decision maker (e.g.,, a government
setting policy, or a firm deciding an investment strategy).
%evanIJCAI
% %joe5: I think we should add some discussion of this issue
% Moreover, it makes it easy to consider the effect of sequential
% interventions, an issue we return to in Section~\ref{??}.

\section{Causal Models}

%evan4: moved to intro
% joe1: expanded
% We assume that the world is described by a set of variables. It is
% useful to split them into two sets: the \emph{exogenous} variables,
% whose values are determined by factors outside the model, and the
% \emph{endogenous} variables, whose values are determined by the
% exogenous variables and other endogenous variables.  Which variables
% should be taken as exogenous and which should be taken as endogenous
% depends on the the situation we are trying to model.  For example, if
% the Federal Reserve is deciding whether to change the interest rates,
% the interest rates should clearly be viewed as endogenous.  But for a
% company trying to decide whether to go ahead with a project that
% involves borrowing money at the current interest rates, the interest
% rates are perhaps best viewed as exogenous.  In any case, because the
% values of the exogenous variables are taken as given, we assume that
% we can intervene only on endogenous variables.

Let $\U$ and $\V$ denote the set of exogenous and endogenous
variables, respectively.  Let $\Rn: \U \cup \V \to 2^\R$ associate to
each $Y \in \U \cup \V$ a 
%joe1
%set of possible values, $\Rn(Y)$, called its \emph{range}.
set $\Rn(Y)$ of possible values  called its range.  
We assume that the range for each variable is finite.
Let $\S = (\U,\V,\Rn)$ denote a \emph{signature}.

%joe1
%Given a signature $\S$, a \emph{model} is a pair $M = (\S,\F)$, where
A \emph{causal model} is a pair $M = (\S,\F)$, where $\S$ is a signature
and
$\F$ is a collection of structural equations that determine the values
of endogenous variables based on the values of other variables.  
%joe1
%Formally, $\F = \{F_X\}_{X \in \V}$ where
Formally, $\F = \{F_X\}_{X \in \V}$, where 
$$F_X: \prod_{Y \in \U \cup (\V - \{X\})} \Rn(Y) \to \Rn(X).$$

%joe1
%Call a model $\F$ \emph{recursive} if there exists a partial order on
A model $M$ is \emph{recursive} if there exists a partial order on
$\V$ such that the structural equations for each variable is
independent of the variables lower in the order. 
When we say that $X$ is independent of $Y$, we mean that $F_X$ does not depend on the value of $Y$.
Given a signature $\S$, let $\M(\S)$ denote the set of all models over the signature $\S$ and $\M^{rec}(\S)$ the set of recursive models.

We use the standard vector notation to denote sets of variables or
their values. If $\vec{X} = (X_1, \ldots, X_n)$ is a vector, we write,
in a standard extension of the usual containment relation, $\vec{X}
%evan1:
%X_i is a single element of \U \cup \V, right?
%joe1
%\in \U \cup \V$ to indicate that each $X_i \in \U \cup \V$. Likewise
%joe2: the notation is backwards
%\in \U \cup \V$ to indicate that each $X_i \subseteq \U \cup
\subseteq \U \cup \V$ to indicate that each $X_i \in \U \cup
\V$. Similarly, we
write $\vec{x} \in \Rn(\vec{X})$ if $\vec{x} \in \prod_{i \leq n}
\Rn(X_i)$.
%joe1: unnecessary
%We say a vector of values $\vec{x}$ \emph{cooresponds} to a
%vector of variables $\vec{X}$ whenever $\vec{x} \in
%\Rn(\vec{X})$. For 
%a vector of variables $\vec{Y}$, with corresponding values $\vec{y}$,
For a vector $\vec{Y}$ of variables, $\vec{y} \in \Rn(\vec{Y})$,
%joe1
%and a sub-vector $\vec{X} \subseteq \vec{Y}$, let $\vec{y}|_{\vec{X}}$
and a vector $\vec{X} \subseteq \vec{Y}$, let $\vec{y}|_{\vec{X}}$
denote the restriction of $\vec{y}$ to $\vec{X}$. In particular, for a
%joe1
%single variable, $X$, we write $\vec{y}|_{X}$ do denote the $X^{th}$
single variable $X$, $\vec{y}|_{X}$ denotes the $X^{th}$
component.  

%joe1
%Call a vector $\u$ of values for the exogenous variables $\U$ a
%\emph{context}. Let $\C(\S) = \prod_{Y \in \U } \Rn(Y)$ consist of all
A \emph{context} is a vector $\u$ of values for all the exogenous
variables $\U$.
Let $\C(\S) = \prod_{Y \in \U } \Rn(Y)$ consist of all
contexts for the signature $\S$. Call a pair $(M, \u) \in \M^{rec}(\S)
%joe1
%\times \C(S)$, a \emph{situation}. Each situation has a unique
\times \C(S)$ a \emph{situation}. Each situation has a unique
solution (i.e., a unique value for each variable that simultaneously
%joe1
%satisfy all structural equations and coincides with the context).
satisfies all the structural equations and agrees with the context).  

Given a model $M = (S,\F)$, we can construct a new model (over the
same signature) that represents the counterfactual situation where
%joe1
%some variables are fixed to specific values.
some variables are set to specific values. 
%joe1: removed paragraph break
%joe1
%Let $\vec{Y} \subset \V$ denote a vector of endogenous variables
%$\vec{y}$ corresponding values. 
If $\vec{Y} \subseteq \V$ and $\vec{y} \in \Rn(\vec{Y})$,
%joe1: It's standard to put the Y<y in the subscript; I'll follow the
%standard here
%From this we can define, $M^\svc{Y} = (\S, \F^\svc{Y})$.
then $M_{\d[\svc{Y}]} = (\S, \F_{\d[\svc{Y}]})$
%$\u^\svc{Y}$ denotes the new context that coincides with $\vec{y}$ on
%$\vec{Y} \cap \U$ and with $\u$ otherwise;
%joe1
%where $\F^\svc{Y}$ is the counterfactual set of structural equations
%defined by $X \in \vec{Y}$,  $F^\svc{Y}_X$ is the constant function
%mapping to $\vec{y}|_{X}$ and, for any other $X \in \V \setminus
%\vec{Y}$, $F^\svc{Y}_X = F_X$. 
denotes the model that is identical to $M$ except that the equation for
%evan1: is X \gets  \vec{y}|_X standard? I seems like it should be X = \vec{y}|_X
% each variable $X \in \vec{Y}$ in $F_{\vec{Y} \gets \vec{y}}$ is replaced
% by $X \gets \vec{y}|_X$.%
%joe2: in formulas, the \getrs  notation is standard; in equations,
%you're right. it should be =
each variable $X \in \vec{Y}$ in $\F_{\d[\svc{Y}]}$ is replaced
%joe2
%by $X \gets \vec{y}|_X$.%
by $X = \vec{y}|_X$.%
%joe1: added footnote
\footnote{It is more standard in the literature to write $M_{\svc{Y}}$
rather than $M_{\d[\svc{Y}]}$.  We use the latter notation because it
makes it easier to express some later notions.}
%joe1
%Note that $M^\svc{Y}$ is recursive if $M$ is.
Note that $M_{\d[\svc{Y}]}$ is recursive if $M$ is.

\subsection{Syntax and Semantics}

%joe1: There'sno need for quotation marks.
%Fix some signature $\S$. For each $Y \in \V \cup \U$, and $y \in
%\Rn(Y)$, let `$X=x$' denote the atomic proposition that the variable
%$X$ takes value $x$. Then let $\L(\S)$ denote the language constructed
Fix a signature $\S$. For each $X \in \U \cup \V$ and $x \in
\Rn(Y)$, let $X=x$ denote the atomic proposition that says that the variable
$X$ takes value $x$. Let $\L(\S)$ denote the language constructed
by starting with these propositions and closing off under negation,
conjunction, and disjunction.%
%joe1
\footnote{It is more standard to have the atomic propositions involve
only endogenous variables, but for our purposes, it is important to
include exogenous variables as well.  We explain why when we discuss
actions below.}
%joe1: this is premature
%For any vector of variables $\vec{X}$
%and associated values $\vec{x}$, let 
%%joe1: no need for the []
%%$[\vec{X} = \vecc{x}]$ be shorthand for the formula  $\bigwedge_{X %\in
%$\vec{X} = \vecc{x}$ be shorthand for the formula  $\bigwedge_{X \in
%\vec{X}} (X=\vecc{x}|_X).$

%joe1: saying this a ore standard way
%Each situation $(M, \u) \in \ \M^{rec}(\S) \times \C(\S)$ provides a
%semantics for $\L(\S)$ via the recursive operator $\models$, defined
%in the obvious way:
A formula $\phi \in \L(\S)$ is either true or false in a
situation $(M,\u) \in \M^{rec}(\S) \times \C(\S)$. We write
$(M,\vec{u}) \models \psi$  if $\psi$ is true in
the situation $(M,\vec{u})$.  The $\models$ relation is defined inductively.

\begin{itemize}
%joe1
%\item $(M, \u) \models$ `$X=x$' iff the unique solution to the system
  %  of equations $\F$, starting with context $\u$, sets $X = x$.
  \item $(M, \u) \models$ $X=x$ iff $X=x$ in the unique solution to the system
  of equations $\F$, starting with context $\u$.
\item $(M, \u) \models \neg \phi$ iff not $(M, \u) \models \phi$.
\item $(M, \u) \models \phi \land \psi$ iff $(M, \u) \models \phi$ and $(M, \u) \models \psi$.
\item $(M, \u) \models \phi \lor \psi$ iff $(M, \u) \models \phi$ or $(M, \u) \models \psi$.
\end{itemize}

An \emph{atom} is a complete description of the values of variables;
its truth will completely determine truth of all of formulae in
$\L(\S)$. Formally, 
%joe1*: As you've defined it, an atom does not determine the truth
%values, since it's not a formula.  I'd rather do the more standard
%approach of taking an atom to be a formula
%let $\at = \prod_{Y \in \V \cup \U} \Rn(Y)$, denote those vectors of
%values for for all variables. To distinguish atoms from vectors of
%variable values that do not include all variables, we will use
%$\vec{a}$ to denote a generic atom.\footnote{I.e., $\vec{a}, \vec{b},
%\ldots$ is a vector of values for every variable, whereas $\vec{y},
%\vec{z}, \ldots $ is vector of values for a (not necessarily strict)
%subset of variables.}
an \emph{atom over $\S$} is a conjunction of the form $\land_{Y \in \U
  \cup \V} Y=y$, where $y$ 
is some value in $\Rn(Y)$.  We use $\vec{a}$ to denote a generic atom.
Let $\at$ denote the set of atoms over $\S$.  Notice that an atom
$\vec{a} \in \at$ determines the truth of all formulas in $\L(\S)$; we
write $\vec{a} \Rightarrow \phi$ if $\phi$ is true in situations satisfying
$\vec{a}$.  
%joe1
%Each situation, $(M,\u)$, can be associated to a unique atom
%representing the values set by the situation. Specifically, let  
%\begin{equation}
%\label{eq:atomfrommodel}
%\vec{a}(M,u) \in \at \text{ be  the unique atom such that }  (M, \u)
%\models [\U \cup \V = \vec{a}] 
%\end{equation}
%joe1: I don't like this notation, since it makes an atom look like a function
%Made it a sbscript (globally)
%Let $\vec{a}(M,\vec{u})$ denote unique atom such that $(M,\vec{u})
Let $\vec{a}_{M,\vec{u}}$ denote unique atom such that $(M,\vec{u})
\models \vec{a}$.

%evan4*: this made writting the uniqness result simpler, can do
%without if needed 
In a slight abuse of notation, identify each context $\u \in \C$ with the
%joe5
%formula positing $\u$ holds. That is identify $\u$ with the formula
formula characterizing $\u$, that is, the formula
$\land_{U \in \U} U=\u|_{U}$.

%joe1*: added
For our later discussion, it is useful to consider an extension
of $\L(\S)$ that includes
formulas of the form $[\vec{Y} \gets \vec{y}]\phi$, where $\phi \in
\L(\S)$, $\vec{Y} \subseteq \V$, and 
$\vec{y} \in \Rn(\vec{Y})$.  
%evan1
%joe2
%We can view this formula saying ``after
We can view this formula as saying ``after
intervening to set the variables in $\vec{Y}$ to $\vec{y}$, the formula
$\phi$ holds''. 
% We can view this formula saying ``after
% intervening to set hte ariables in $\vec{Y}$ to $\vec{y}$, the formula
% $\phi$ holds''. 
 Call this extended language $\L^+(\S)$.  We can extend
the semantics that we gave to formulas in 
%evan1
%$\L(\S)$ in causal models as follows:
$\L(\S)$ as follows:
\begin{itemize}
\item  $(M,\vec{u}) \models [\vec{Y} \gets
  \vec{y}]\phi$ iff $(M_{\d[\vec{Y} \gets \vec{y}]}, \vec{u}) \models
  \phi$.
\end{itemize}

In the sequel, we will also make use of another approach to giving
semantics to counterfactuals, due to David Lewis and Robert Stalnaker
\cite{Lewis73,Stalnaker68}.  This approach is based on the idea of ``closest
worlds''; roughly speaking, with this approach, $[\vec{Y} \gets
  \vec{y}]\phi$ is true at a world $\omega$ if $\phi$ is true at all
the worlds closest to $\omega$ where $\vec{Y} = \vec{y}$.  Lewis
formalizes the idea of ``closest world'' 
%evan1
% a ternary relation $R$ where, for each $\omega \in \Omega$, $R(\omega,
% \cdot, \cdot)$ is a partial order on $\Omega.$  
%joe2
%though a ternary relation $R$ where, for each $\omega \in \Omega$, $R(\omega,
using a ternary relation $R$ where, for each $\omega \in \Omega$, $R(\omega,
\cdot, \cdot)$ is a partial order on $\Omega.$  
We can
interpret $R(\omega_1,\omega_2, \omega_3)$ as saying that $\omega_2$ is
closer to $\omega_1$ than $\omega_3$ is.  (In this paper, we focus on the case
that, for each world $\omega$, $R(\omega, \cdot, \cdot)$ is a strict
linear order.)

In more detail, a \emph{Lewis-style model} is a tuple $M=(\Omega,R, I)$,
where $R$ is a ternary relation on $\Omega$ as above, and $I$ is an
\emph{interpretation} that determines whether each atomic proposition
is true or false at each state.  Formally, if $AP$ is the set of
atomic propositions (as implicitly determined by a set $\U \cup \V$ of
exogenous and endogenous variables), then $I : \Omega \times AP
\rightarrow \{{\bf true, \, false}\}$.  We can again define a
relation $\models$ by induction, by taking
\begin{itemize}
\item $(M,\omega) \models X=x$ iff $I(\omega,X=x) = {\bf true}$,
\end{itemize}
defining the semantics of negation, conjunction, and disjunction as
above, and for interventions, taking
\begin{itemize}
%evann1*
% \item $(M,\omega_1) \models [\vec{Y} \gets   \vec{y}]\phi$ iff $(M,\omega_2)
% \models \phi$, where $\omega_2$ are the minimal worlds according to
% the order $R(\omega,\cdot,\cdot)$; that is, the worlds closest to $\omega_1$.
\item $(M,\omega) \models [\vec{Y} \gets   \vec{y}]\phi$ iff $(M,\omega')
\models \phi$, for all $\omega'$ that are minimal worlds according to
 the order $R(\omega,\cdot,\cdot)$ such that $(M,\omega') \models \vec{Y} =  \vec{y}$; 
 that is, the worlds closest to $\omega$ for which $\vec{Y} =  \vec{y}$ holds.
\end{itemize}
As shown by Halpern \nciteyear{Hal40}, recursive causal models
correspond in a precise sense to a subclass of Lewis-style models; we
%evan1: added correct label
% return to this point in Section \ref{thm:representation}.
return to this point in Section \ref{sec:repPrf}.

\section{Decision Environment}

%joe1
%A \emph{primitive action} is a (possibly empty) finite list of atomic
%propositions `$Y_1=y_1$' $\ldots$ `$Y_n=y_n$' with $Y_i \in \V$ and
%$y_i \in \Rn(y)$ such that all $Y_i$'s are distinct, written as
%$\d[Y_1=y_1 \ldots Y_n=y_n]$. We abbreviate $\d[Y_1=y_1 \ldots
%Y_n=y_n]$ as $\d[\vec{Y} \leftarrow \vec{y}]$ where $\vec{Y}$ and
%$\vec{y} $ are the vectors $(Y_1 \ldots Y_n)$ and $(y_1 \ldots y_n)$,
%respectively.  The interpretation of the primitive action $\d[\vec{Y}
%\leftarrow \vec{y}]$ is to take an intervention that sets each $Y_i$
%to the value $y_i$.  Let $\A^{\text{prim}}$ denote set of primitive
%actions. 
A \emph{primitive action} over $\S$ has
the form $\d[Y_1  \gets y_1, \ldots, Y_n 
\gets y_n]$, abbreviated as $\d[\vec{Y} \gets \vec{y}]$, 
where $Y_1, \ldots Y_n$ is a (possibly) empty list of
%joe1
%distinct variables in $\V$ with corresponding values $y_1, \ldots, n$
distinct variables in $\V$ and $y_i \in \Rn(Y_i)$ for $i = 1, \ldots,
n$. 
%joe1
%We abbreviate this as $\d[\vec{Y} \gets \vec{y}]$.
This action represents an intervention that sets each $Y_i$
to the value $y_i$.  As we said earlier, we allow interventions only
on endogenous variables.  
%joe1
%Let $\A^{\text{prim}}(\V)$ denote set of primitive actions over $\V$.

%joe1
%The set of \emph{actions}, $\A$, is defined recursively beginning with
%the the primitive actions. Then, for any $A,B \in \A$ and $\phi \in
%$\phi \in \L(\V$, then
%$$ \textbf{if } \phi  \textbf{ then } A \textbf{ else } B. $$
%We write  $$\textbf{if } \phi  \textbf{ then } A $$
%as an abbreviation for the case where $B$ is the trivial action that
%does sets no values (i.e., do true). 
The set $\A(\S)$ of \emph{actions} over $\S$ is defined recursively,
starting with 
the primitive actions, and closing under \textbf{if} \ldots
    \textbf{then} \ldots \textbf{else}, so that  if
$A,B \in \A$ and $\phi \in \L(\S)$, then
$$ \textbf{if } \phi  \textbf{ then } A \textbf{ else } B \in \A.$$
We take $\textbf{if } \phi  \textbf{ then } A $ to be 
an abbreviation of
$ \textbf{if } \phi  \textbf{ then } A \textbf{ else } \d[\,]$
($\d[\,]$ is the trivial action that sets no values).
%joe1: added
Note that although we restrict interventions to interventions to endogenous
variables, the tests can involve exogenous variables.  For example,
even if interests rates are fixed (and hence taken to be exogenous),
we may want to say ``if the interest rate is 5\% then borrow \$1,000,000''
(where  we take the amount borrowed to be represented by an endogenous
variable).  

%joe1
%The decision maker entertains a preference relation (weak order) $\s$
%over the set of all actions $\A(\V)$. Call a formula $\phi$ \emph{null} if
We assume that 
the decision maker has a preference relation (weak order) $\s_\S$
over the set of all actions in $\A(\S)$.  (We often omit the $\S$ if
it is clear from context or plays no role in the discussion.)
%joe1: added
As usual, we write $A \sim B$ if $A \s B$ and $B \s A$.
Call a formula $\phi$ \emph{null} if
its conditional preference is trivial, that is, if $(\textbf{if } \phi
\textbf{ then } A)  \sim  (\textbf{if } \phi  \textbf{ then } B)$ for
%joe1
%all actions $A,B \in \A$. Call $\phi$ non-null if it is not null.
all actions $A,B \in \A(\S)$. Call $\phi$ \emph{non-null} if it is not null. 

%joe1: Here I think it is important to make the type clear:
%Each action $A \in \A$ also defines a mapping $\at \to
%\A^{\text{prim}}$, which we refer to as $h_A$. These functions are
%defined recursively, via,
%$$h_{\d[\svc{Y}]}: \vec{a} \mapsto \d[\svc{Y}].$$
Each action $A \in \A(\S)$ defines a mapping $h_A: \at \to
%evan1:
% \A^{\text{prim}}(\S)$. The functions $h_A$ are
\A^{\text{prim}}(\S)$ (where $\A^{\text{prim}}(\S)$ is the set of primitive actions).
 The functions $h_A$ are
defined recursively as follows:
$$h_{\d[\svc{Y}]}(\vec{a}) = \d[\svc{Y}]$$
and
%joe1
%$$ h_{\textbf{if } \phi  \textbf{ then } A \textbf{ else } B}: \vec{a} \mapsto
$$ h_{\textbf{if } \phi  \textbf{ then } A \textbf{ else } B}(\vec{a}) =
\begin{cases}
%joe1: 
%  h_A(\w) \text{ if } [\U \cup \V = \vecc{a}] \implies \phi \\
  %h_B(\w) \text{ if } [\U \cup \V = \vecc{a}] \implies \neg \phi
    h_A(\vec{a}) \text{ if } \vec{a} \Rightarrow \phi \\
h_B(\vec{a}) \text{ if } \vec{a} \Rightarrow \neg \phi.
\end{cases}
$$

\subsection{Representation}
\label{sec:rep}
A \emph{subjective causal expected utility} representation understands
preferences as maximizing the expected utility of an action, relative
to uncertainty regarding the values of exogenous
variables. Specifically, the representation is governed by 
\begin{itemize}
    \item $M$: a model  that dictates the causal equations
%joe1
      %    \item $\p$: a probability distribution over contexts
%\item $\mu$: a utility function over assignments of variables (i.e., atoms) 
    \item $\p$: a probability distribution on contexts
    \item $\mu$: a utility function on atoms.
\end{itemize}

% Each action determines the primitive action to be taken in each
% %joe1y 
% %context, and therefore, counterfactually according to $Mthe final
% %values of all variables. Utilities are defined directly over these
% context, and thus the final
% values of all variables in that context.
%evan4: moved to intro
%joe1*: It sounds strange to me to call states "resolutions of
%uncertainty".  Rewrote somewhat
% Utility is defined directly on the resulting atom.  Since a context
% determines an atom, we can view both probability and utility as being
% defined on atoms.  Thus, 
% in contrast to the decision-theoretic
% tradition following Savage, here there is no separation between
% %joe1*
% %resolutions of uncertainty (i.e., the states) and consumption objects
% %(over which utilities are defined). Utilities are defined directly
% %over resolutions of uncertainty. This is, of course, very natural when
% %discussing preferences over interventions that determine how
% %uncertainty resolves.
% states (on which probability is defined) and outcomes
% (on which utilities are defined). 

%joe1: I don't know what you mean by "reduced form"
% In a reduced form, an action (along with the model $M$) can be though
An action (along with the model $M$) can be thought
 of as a operation that assigns values to all variables given a
 context.  
%joe1
 %Specifically given a model $M$, for each $A \in \A$, define
%$\beta^M_A: \C(\S) \to \at$, as 
 Specifically given a model $M$, for each $A \in \A(\S)$, define
$\beta^M_A: \C(\S) \to \at$ as 
%evanIJCAI
%  $$\beta^M_{A}(\u) = \text{ the unique atom } \vec{a} \in \at \text {
% %joe1
% %   such that } M^{h_A(\vec{a}{M,\u))} \models [\U \cup \V = such that
% %   \vec{a}]$$ 
%  % such that } (M_{h_A(vec{a}_{M,\u})},\vec{u})  \models \vec{a}.$$
%  such that } (M_{h_A(\vec{a}_{M,\u})},\vec{u})  \models \vec{a}.$$
\begin{align*}
\beta^M_{A}(\u) = &\text{ the unique atom } \vec{a} \in \at  \\ 
& \text { such that }
 (M_{h_A(\vec{a}_{M,\u})},\vec{u})  \models \vec{a}.
\end{align*}

Unpacking this:  if the context is $\u$, then the initial assignment
%joe1
%of variables is expressed by the atom $\vec{a}(M,\u)$, and therefore,
of variables is expressed by the atom $\vec{a}_{M,\u}$, so
%joe1
%the primitive action proscribed by $A$ is $h_A(\vec{a}_{M,\u})$. Thus,
the primitive action prescribed by $A$ is $h_A(\vec{a}_{M,\u})$. Thus,
the final assignment of variables (i.e., after the intervention
%joe1: 
%$h_A(\vec{a}(M,\u))$) is determined by the counterfactual model
%$M^{h_A(\vec{a}(M,\u))}$: this is what is captured by
$h_A(\vec{a}_{M,\u})$) is determined by the situation
$(M_{h_A(\vec{a}_{M,\u})},\vec{u})$: this is what is captured by
$\beta$.
%joe1
%\footnote{Note there is a slight abuse of notation: if
%$h_A(\vec{a}(M,\u)) = \d[\svc{Y}]$, then we intend
%$M^{h_A(\vec{a}(M,\u))}$ denotes counterfactual model $M^{\svc{Y}}$.} PP
%joe1
%Formally, \emph{subjective casual causal expected utility} representation is
%a tuple $(M, \p, \mu)$ where $\p \in \Delta(\C(\S))$ is a
%probability distribution over contexts and $\mu: \at \to \R$ is a
%utility function over resolutions of uncertainty.  
%With this in place, say that the subjective casual
% causal expected utility
%representation  $(M, \p, \mu)$ \emph{represents} $\s$ if   

\begin{definition}
  $(M,\p,\mu)$ \emph{is a subjective causal utility representation of
    $\s_\S$}, where $M=   (\S,\F)$ and $\S = (\U,\V,\R)$, if
$\p$ is a probability on $\Rn(\U)$, $\mu$ is a
  utility function on $\A(\S)$, and
%joe1: moved up here
  $A \s_{\S} B$ iff 
\begin{equation}
\label{eq:rep}
\sum_{\u \in \C(\S)} \mu(\beta^M_A(\u))\p(\u) \geq \sum_{\u \in
  \C(\S)} \mu(\beta^M_B(\u))\p(\u). 
\end{equation}
%joe1
%if and only if 
\end{definition}

%joe1: added
In a representation, we think of contexts as states and atoms as
outcomes, so $\beta_A^M$ can be viewed as a function from states to
outcomes.  Thus, by associating $A$ with $\beta_A^M$, we can think of
$A$ as a function from states to outcomes---exactly how Savage views acts.
% Call a subjective casual
% causal expected utility representation
% \emph{language based} if whenever two states $\w, \w' \in \W$
% satisfy the same sentences, then they are associated to the same
% situation, that is: if $(M_\w, \u_\w) \models \phi$ iff $(M_{\w'},
% \u_{\w'}) \models \phi$ for all $\phi \in \L(\S)$, then $(\u_{\w},
% \F_{\w}) = (M_{\w'}, \u_{\w'})$. 

\section{Axioms}

%joe1: added
In this section, we discuss the axioms that we need to ensure the
existence of a causal expected utility representation.

The first axiom is the cancellation axiom of
%joe1
%\cite{blume2021constructive}. To define this we need the notion of a
Blume, Easley, and Halpern \nciteyear{BEH06}.
%joe5: now said it earlier
%(BEH from now on). To
To
define this we need the notion of a 
multiset, which can be thought of as a set that allows for multiple
instances of 
each of its elements; two multisets are equal just in case they contain the same elements with the same
multiplicities. For example, the multiset $\ms{a, a, a, b, b}$ is
different from the multiset $\ms{a, a, b, b, b}$: both 
multisets have five elements, but the multiplicities of $a$ and $b$
%joe1
%differ (assuming $a\neq b$).
differ in the two multisets (assuming $a\neq b$). 

%joe1
%With these notations in place, we can now state the cancellation axiom.
Using this notation, we can state the cancellation axiom.

\begin{ax}[Cancellation]
\label{ax:canc}
If $A_1 \ldots A_n, B_1 \ldots B_n \in \A$ and $\ms{h_{A_1}(\vec{a})
  \ldots h_{A_n}(\vec{a})} = \ms{h_{B_1}(\vec{a}) \ldots
  h_{B_n}(\vec{a})}$ for all $\vec{a} \in \at$, then $A_i \s B_i$ for
all $i < n$ implies $B_n \s A_n$. 
\end{ax}

%joe1
%Cancellation, as in the prior literature, serves the purpose of
%allowing the construction of a state-space and additive representation
%of the primitive. The state space that is constructed via the BEH
%methodology will not, however, relate to the causal structure embodied
%in the preference. The remaining axioms discipline preferences so as
%to ensure that the state-space can be understood as represented the
%decisdecisionaion maker's causal judgments.  
As in the prior literature, cancellation allows us 
to construct a state space and an additively separable utility representation
of $\s_{\V}$. The state space that is constructed via the BEH
methodology does not have any causal structure.
The remaining axioms ensure that we can construct a subjective utility
representation that is a causal model.

As the remaining axioms speak directly to the structure of the
\emph{do} action, the following notation will be helpful:
%joe1: I found all the notation hard to deal with.  I think it would
%help the reader to add a "do" here, which I've done globally (I hope!).
%For each atom $\vec{a} \in \at$, write $[\svc{Y}] \fixes_{\vec{a}} (Z
For each atom $\vec{a} \in \at$, write $\d[\svc{Y}] \fixes_{\vec{a}} (Z
= z)$ as shorthand for the indifference relation 
\begin{equation}
  \label{eq:fixes}
  %joe1
  %\textbf{if }  [\U \cup \V = \vecc{a}]  \textbf{ then }  [\svc{Y}, Z
%  \leftarrow z]  \sim \textbf{if } [\U \cup \V = \vecc{a}]  \textbf{
%  then } [\svc{Y}]  
  \textbf{if }  \vecc{a}  \textbf{ then } \d [\svc{Y}, Z
  \leftarrow z]  \sim \textbf{if } \vecc{a}  \textbf{
  then } \d [\svc{Y}].  
\end{equation}

%joe1: added
%Notice that, in a causal model, intervening to set $\vec{Y}$ to
%$\vec{y}$ causes $Z$ to equal $z$, if and only if the further
%This notion is supposed to say, essentially, that in the context
This says, essentially, that in the context
characterized by the atom $\vec{a}$, setting $\vec{Y} = \vec{y}$
results in $Z=z$.  To understand why, 
notice that intervening to set $\vec{Y}$ to
%joe1
%$\vec{y}$ causes $Z$ to equal $z$, if and only if the further
%intervention setting $Z$ to $z$ would have no effect. Intuitively
$\vec{y}$ causes $Z$ to equal $z$ if and only if the further
intervention setting $Z$ to $z$ has no effect;
$\d[\svc{Y}] \fixes_{\vec{a}} (Z = z)$ can be thought of as capturing
this situation, as the decision maker sees no additional benefit to
the additional intervention on $Z$. Of course, it could be that $Z$ is
in fact set to some different value $z'$, but
%joe1
%it just so happens to be that the decision maker is indifferent,
%given the fixed values of the other variables,
%between the two states if affairs where $Z$ takes
%value $z$ or $z'$. While this impedes the unique identification of a
%model, it is irrelevant for our existence results, as for our purpose,
%we can simply treat the $z$ world and the $z'$ world identically.  
the decision maker is indifferent between $Z=z$ and $Z=z'$,
given the values of the other variables.  This indifference is
irrelevant as far as the existence of a subjective utility
representation goes; we can 
we can simply treat the worlds where $Z=z$ and  $Z=z'$ identically.  

%joe1*: 
%The first axiom, \emph{context dependence} ensures that decision maker
%entertains a single causal model. In other words, all uncertainty
%regards the values of variables and the decision maker harbors no
%uncertainty about the structural equations themselves. This can be
%relaxed at the cost of increasing the notational burden on the
%representation.
The next axiom, \emph{model uniqueness}, ensures that there is only
one model in our representation, rather than a distribution over models;
that is, the decision maker has no uncertainty about the structural
equations.  The axiom itself just says that, for each context
$\vec{u}$, there is at most one atom $\vec{a}$ compatible with $\vec{u}$ that is
non-null.  Intuitively, if $M$ represents $\s_\S$, then
$(M,\vec{u}) \models \vec{a}$.

%joe1: it's not that the model is invariant, but that there is only one
%\begin{ax}[Model Invariance]
\begin{ax}[Model Uniqueness]
\label{ax:ctx}
For each context $\u$, there exists a most one atom $\vec{a} \in \at$
%joe1
%such that $\vec{a}|_{\U} = \u$ and $[\U\cup\V = \vec{a}]$ is
such that $\vec{a} \Rightarrow (\U = \u)$ and $\vec{a}$ is
non-null. 
\end{ax}

%joe1
%Even if the effect of an intervention cannot be uniquely identified,
The next axiom, Definiteness, ensures that there  must be some value $x
\in \Rn(X)$ such that $X=x$ after intervening to set $\vec{Y}$ to
$\vec{y}$.
%joe1
%there still must be some value if $z$ for which \eqref{eq:fixes}
%holds. This is the content of our next axiom, \emph{definiteness},
%which is related to the axiom of the same name introduced by Galles
%and Pearl \cite{galles1998axiomatic} and \cite{halpern2000axiomatizing}. 
This is essentially the content of the axiom of the same name
introduced by Galles and Pearl
\nciteyear{GallesPearl98}, also used by Halpern \nciteyear{Hal20}.

\begin{ax}[Definiteness]
\label{ax:def}
%joe1
%For each atom $\vec{a}$, and each vector of variables $\vec{Y}$ and
%corresponding values $\vec{y}$, and each $X \notin \vec{Y}$, there
%exists some $x \in \Rn(X)$ such that $\d[\svc{Y}] \fixes_{\vec{a}} (X
For each atom $\vec{a}$, vector of $\vec{Y}$ of endogenous variables,
$\vec{y} \in \Rn(\vec{Y})$,
and endogenous variable $X \notin \vec{Y}$, there
exists some $x \in \Rn(X)$ such that $\d[\svc{Y}] \fixes_{\vec{a}} (X = x)$. 
\end{ax}

%joe1: it's actually called centering, not centrality
%Our next axiom, \emph{centrality}, dictates that intervening to set
The next axiom, \emph{Centeredness}, dictates that intervening to set
variables to their actual values does not change the values of other
%joe1
%variables. In essence, it states that trivial interventions are indeed
%trivial. Its namesake is derived from Lewis-style counterfactual
%reasoning, where a centrality axiom dictates that the the closest
%world to a world $\w$ is $\w$ itself. 
variables; that is, trivial interventions are indeed
trivial.  It is named after \emph{centering} property considered by
Lewis \nciteyear{Lewis73}, which ensures that the the closest
world to a world $\w$ is $\w$ itself.
Let $\vec{a}|_{\vec{Y}}$ denote the restriction of the atom $\vec{a}$
to the conjuncts in $\vec{Y}$.

%joe1
%\begin{ax}[Centrality]
%evan1
%joe2: undid change (even if you're content with it, we defiitely
%don't want to call it contentedness)
 \begin{ax}[Centeredness]
%\begin{ax}[Contentedness]
\label{ax:cent}
%joe1
%For each atom $\vec{a}$, and each vector of variables $\vec{Y}
%\subseteq \V$ and $X \notin \vec{Y}$,  we have $\d[\vec{Y} \leftarrow
For each atom $\vec{a}$, vector of endogenous variables $\vec{Y}
$, and endogenous variable $X \notin \vec{Y}$,  we have
%evan1*: shouldn't it be \d[Y \gets \vec{a}|_{\vec{Y}}], etc
% $\d[\vec{a}|_{\vec{Y}}] \fixes_{\vec{a}} \vec{a}|_{X}$.  
%joe2: the current version looks right, modulo the typo below.
% $\d[\vec{Y} \gets \vec{a}|_{\vec{Y}}] \fixes_{\vec{a}} (\vec{X} \gets
% %joe2
% %\vec{a}|_{X})$.
% \vec{a}|_{\vec{X}})$.   
%evan2: above, X is a single variable
%evan2: and the \fixes notation is defiend with X=x on the RHS
$\d[\vec{Y} \gets \vec{a}|_{\vec{Y}}] \fixes_{\vec{a}} (X = \vec{a}|_{X})$.   
\end{ax}

Finally, as we are interested in recursive causal models, we require
%joe1
%an axiom furnishing the recursive order on the set of variables.
an axiom that forces the dependency order on endogenous variables to be acyclic.
Towards this, for $X,Y \in \V$, say that $X$ \emph{is unaffected by }
%joe1
%$Y$ (at $\vec{a}$) if
$Y$ (given $\vec{a}$) if 
    \begin{align}
    \begin{split}
\d[\svc{Z} ] \fixes_{\vec{a}} (X = x) &\text{ iff } \d[\svc{Z}, Y \leftarrow y] \fixes_{\vec{a}} (X = x) 
    \end{split}
    \label{eq:leadstocontra}
    \end{align}
    %joe1
    %    for all vectors $\vec{Z} \in \V \setminus \{X,Y\}$ and
%        $\vec{z} \in \Rn(\vec{Z})$, $y \in \Rn(y)$ and $x \in \Rn(x)$. So
  for all $\vec{Z} \in \V \setminus \{X,Y\}$,
    $\vec{z} \in \Rn(\vec{Z})$, $y \in \Rn(y)$, and $x \in \Rn(x)$. So
    $X$ is unaffected by $Y$ if there is no intervention on $Y$ that
    changes the decision maker's perception of $X$ (conditional on
%joe1
%    atom $\vec{a}$). If this relation does not hold say that $X$ is
    %    affected by $Y$, written $Y \leadsto_{\vec{a}} X$.\footnote{We
        atom $\vec{a}$). If this relation does not hold, then $X$ is
        \emph{affected} by $Y$, written $Y \leadsto_{\vec{a}} X$.
%joe1: I would just cut this; I don't think it adds anything useful.
%        \footnote{We
%    could define $\leadsto_{\vec{a}}$ directly using a slightly more
%    cumbersome definition: $Y \leadsto_{\vec{a}} X$ if there exists
%    some vectors $\vec{Z}$ of variables and values $\vec{z} \in
%    \Rn(\vec{Z})$, with $X,Y \notin \vec{Z}$ and some values $y \in
%    \Rn(Y)$, and $x \in \Rn(X)$ such that either 
%    \begin{align}
%    \begin{split}
%\d[\svc{Z} ] \fixes_{\vec{a}} (X = x) &\text{ and NOT } \d[\svc{Z}, Y
%\leftarrow y] \fixes_{\vec{a}} (X = x) \text{ or} \\ 
%[\svc{Z}, Y \leftarrow y ] \fixes_{\vec{a}} (X = x) &\text{ and NOT }
%\d[\svc{Z}\,] \fixes_{\vec{a}} (X = x) 
%    \end{split}
%    \label{eq:leadsto}
%    \end{align}
%    }
%joe1
        %        Our final axiom states that this ordering is recursiv
        %joe2*:
%Let $\leadsto = \cup_{\vec{a}} \leadsto_{\vec{a}}$. 
% evan2: spacing
Let $\leadsto \ = \cup_{\vec{a}} \leadsto_{\vec{a}}$. 
Our final axiom states that $\leadsto$ is acyclic; 
it is inspired by the corresponding axiom in \cite{Hal20}.

\begin{ax}[Recursivity]
\label{ax:rec}
%joe2*
%For each atom $\vec{a} \in \at$,  $\leadsto_{\vec{a}}$ is acyclic.
% $\leadsto_{\vec{a}}$ is acyclic.
%, and if $Y \leadsto_{\vec{a}} X$, then $X \in \V$.
%evan2: i think we do not want the _vec, 
$\leadsto$ is acyclic.
\end{ax}
%joe2:
Note that Axiom~A\ref{ax:rec} implies that $\leadsto_{\vec{a}}$ is
acyclic for each atom $\vec{a}$.

%joe2
%These axioms are a characterization of subjective causal expected utility:
As we now show, a preference order satisfies these axioms iff it has
subjective causal expected utility representation.

\begin{theorem}
\label{thm:representation}
%joe1: added the A to the axiom name globally
%The preference $\s$ satisfies Axioms \ref{ax:canc}--\ref{ax:rec} if
%and only if it entertains a subjective causal expected utility
The preference order $\s_\S$ satisfies Axioms A\ref{ax:canc}--A\ref{ax:rec} if
and only if it has a subjective causal expected utility
representation $(M,\p,\mu)$. Moreover, if $\s_\S$ satisfies 
Axiom A\ref{ax:defp}\,$^*$, then $M$ is unique over the set of non-null contexts.
\end{theorem}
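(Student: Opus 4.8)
The plan is to prove both directions, treating the ``if'' direction (representation $\Rightarrow$ axioms) as essentially routine and concentrating effort on the ``only if'' direction, where a causal model must be built out of preferences. For the easy direction I would fix a representation $(M,\p,\mu)$ and verify each axiom in turn. Cancellation holds because the expected utility in Equation~\eqref{eq:rep} is additively separable across contexts, so summing the hypothesized preferences $A_i \s B_i$ and using the multiset identity among induced primitive actions forces $B_n \s A_n$. Model Uniqueness holds because each situation $(M,\u)$ has a unique solution, so at most one atom is compatible with a non-null context. Definiteness and Centeredness are immediate properties of interventions in a causal model (every intervention yields a definite solution, and setting variables to their actual values is inert). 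Recursivity holds because the preference-defined relation $\leadsto$ coincides with the direct-dependency relation of $M$---all other variables are pinned by the intervention $\d[\svc{Z}]$, so only a genuine dependence of $F_X$ on $Y$ can register---and that relation is acyclic since $M$ is recursive.

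For the forward direction I would proceed in three stages, following the route signposted in the introduction (BEH $\to$ BH $\to$ Hal40). First, I would use Cancellation (Axiom~A\ref{ax:canc}) together with the BEH additive-representation machinery to obtain an additively separable representation on the state space of atoms: a function assigning to each pair $(\vec{a},\d[\svc{Y}])$ a real number, with $A \s B$ iff the sum over $\vec{a}\in\at$ of the value of $(\vec{a},h_A(\vec{a}))$ dominates the corresponding sum for $B$. Null atoms contribute a constant and drop out of every comparison. Second, I would read the counterfactual content off the relation $\fixes_{\vec{a}}$: by Definiteness (Axiom~A\ref{ax:def}) each intervention $\d[\svc{Y}]$ fixes, for every endogenous $X\notin\vec{Y}$, at least one value of $X$, so each $(\vec{a},\d[\svc{Y}])$ determines a post-intervention atom; Centeredness (Axiom~A\ref{ax:cent}) supplies the centering property, and Model Uniqueness (Axiom~A\ref{ax:ctx}) guarantees a single non-null atom per context. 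These are exactly the hypotheses of the BH closest-world representation, which I would invoke to build a Lewis-style model whose $\models$-relation on intervention formulas agrees with $\fixes$. Third, Recursivity (Axiom~A\ref{ax:rec}) forces the induced closest-world order into the subclass that, by Halpern's correspondence \nciteyear{Hal40}, is generated by a recursive structural causal model; applying that correspondence extracts the desired $M=(\S,\F)$.

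It then remains to transfer the additive weights into a probability and a utility compatible with $M$. Using Model Uniqueness I would identify each non-null context $\u$ with its unique non-null atom $\vec{a}_{M,\u}$ and read $\mu$ on that atom off the value of the trivial action $\d[\,]$, factoring the additive weight as $\p(\u)\,\mu(\vec{a})$; the values of non-trivial primitive actions then determine $\mu$ on the remaining post-intervention atoms, consistently because $\beta^M_A$ computes precisely the atom prescribed by $\fixes$. A final check is that $\p$ can be normalized to a genuine probability---non-negative on non-null contexts, total mass one---after which Equation~\eqref{eq:rep} holds by construction. The main obstacle, and the step I would spend the most care on, is the seam between these stages: showing that the model delivered by BH and Hal40 is such that $\beta^M_A(\u)$ reproduces, atom by atom, the primitive action $h_A(\vec{a}_{M,\u})$ used in the additive representation, while respecting the deliberate slack noted after the definition of $\fixes$ (the decision maker may be indifferent between the value $z$ read off and the value $z'$ actually realized). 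I would resolve this by working throughout with utility-equivalence classes of atoms, so that $\mu$ is well defined on exactly the distinctions the preferences can detect.

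For the uniqueness clause I would argue that the strengthened axiom (Axiom~A\ref{ax:defp}$^*$) removes this slack: it forces each counterfactual $\d[\svc{Y}]\fixes_{\vec{a}}(X=x)$ to hold for a single value $x$, so $\fixes$ pins down every intervention formula at every non-null context. Two recursive causal models representing $\s_\S$ must then agree on all such counterfactuals, and by the Hal40 correspondence a recursive causal model is determined by its counterfactual behavior; hence $M$ is unique over the set of non-null contexts.
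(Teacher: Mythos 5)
Your high-level route (BEH cancellation $\to$ BH Lewis-style representation $\to$ conversion to a causal model) is the same one the paper signposts, your easy direction is the part the paper leaves implicit, and your uniqueness argument is essentially the paper's proof of Theorem~\ref{thm:uniq}. But the forward direction has a genuine gap at its core: you treat the BH step as if its hypotheses were Definiteness, Centeredness, and Model Uniqueness. They are not. BH's Theorem 1 requires, besides Cancellation, a \emph{family of closeness orders} $<_{\vec{a}}$ on atoms together with the property (their Lemma 5, the paper's Lemma~\ref{lem:sameIntervention}) that, conditional on $\vec{a}$, performing $\d[\svc{Y}]$ is indifferent to setting \emph{all} endogenous variables to their values in the $<_{\vec{a}}$-minimal atom satisfying $\vec{Y}=\vec{y}$. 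Your claim that Definiteness makes ``each $(\vec{a},\d[\svc{Y}])$ determine a post-intervention atom'' is exactly the step that is not immediate: Axiom~A\ref{ax:def} yields, for each variable $X\notin\vec{Y}$ \emph{separately}, a nonempty set of values $x$ with $\d[\svc{Y}]\fixes_{\vec{a}}(X=x)$, and choosing one value per variable independently need not produce a coherent atom. Coherence requires that the value chosen for $X$ be fixed by the intervention setting the earlier variables to their already-chosen values, and that ties (several values fixed, because of indifference) be broken consistently across \emph{different} interventions $\vec{Y}$. Supplying this is the heart of the paper's proof: Recursivity is used at the outset (not only at the final conversion stage, as in your plan) to linearize $\leadsto$ into $\dot{\leadsto}$, the orders $<_{\vec{a}}$ are built lexicographically from $\dot{\leadsto}$ and fixed tie-breaking orders $\ll^{X}_{\vec{a}}$ via \eqref{o.1} and \eqref{o.2}, and Lemmas~\ref{lem:minatom}--\ref{lem:sameIntervention} then show the minimal atom is coherent and that the intervention is preference-indifferent to it. Your proposed fix---working with utility-equivalence classes of atoms---does not fill this hole: the problem is not identifying indifferent atoms but selecting a closest atom consistently across interventions (see the paper's footnote explaining why $\ll^{X}_{\vec{a}}$ must be fixed once and for all), and the BH construction needs genuine atoms and genuine linear orders, not equivalence classes.

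Two smaller mismatches. First, the final conversion is not a black-box appeal to the Hal40 correspondence in the paper: Lemma~\ref{lem:biequiv} constructs the equations $\F^C$ explicitly by induction along $\dot{\leadsto}$, and the accompanying bookkeeping (restricting to the contexts $\C^\dag$ given by Axiom~A\ref{ax:ctx}, normalizing the BH probability, and verifying $\beta^{M^C}_A(\u)=(\vec{a}_{\u})_{h_A(\vec{a}_{\u})}$) is what makes the causal-model expected utility coincide with the BH one; your ``seam'' worry is the right one, but your sketch does not discharge it. Second, in your converse direction, $\leadsto$ is not the direct-dependency relation of $F_X$ on $Y$ (the intervened set $\vec{Z}$ need not pin all other variables, so indirect influence also registers); acyclicity still follows from recursiveness of $M$, but the justification you give is not the right one.
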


%evan4*
%joe5
%There may in general be many different subjective causal expected
There may, in general, be many different subjective causal expected
utility representations of the same preference relations. This is
because, when the decision maker is indifferent between distinct
%joe5
%atoms, the preference cannot distinguish between them, and hence
%cannot not distinguish between different sets of structural equations
%that yield distinct, but equally valued atoms. 
atoms, the preference relation cannot distinguish between them, and hence
cannot distinguish between structural equations
that yield distinct but equally valued atoms. 

\shortv{
%evan6: added
By strengthening Axiom \ref{ax:def} to require that the indifference
condition holds for a unique assignment of variables, it is possible
to avoid this problem of multiple representations. This stronger
axiomatization ensures there is a unique model that rationalizes the
decision maker's preference. We refer the interested reader to the
appendix for the formal axiom and uniqueness claim. 
}

\fullv{
%joe4
%Consider the following strengthening A\ref{ax:def}:
Consider the following strengthening of A\ref{ax:def}:
\addtocounter{ax}{-3}
\begin{axprime}[Strong Definiteness]
\label{ax:defp}
For each non-null atom $\vec{a}$, vector of $\vec{Y}$ of endogenous variables,
$\vec{y} \in \Rn(\vec{Y})$,
and endogenous variable $X \notin \vec{Y}$, there
exists a \emph{unique} $x \in \Rn(X)$ such that $\d[\svc{Y}] \fixes_{\vec{a}} (X = x)$. 
\end{axprime}

%joe5
%Strengthening A\ref{ax:def} to A\ref{ax:defp}\,$^*$, is both necessary
%and sufficient to obviate the multiplicity of representations. 
Strengthening A\ref{ax:def} to A\ref{ax:defp}$^*$, is both necessary
and sufficient to avoid multiple representations.

\begin{definition}
%joe5
%  Let $(M,\p,\mu)$ be a a subjective causal utility representation of
%$\s_\S$. Then say that $M$ is \emph{identified} if for any other
%$(M',\p',\mu')$ and any non-null $\u$
  %representation
  A subjective causal utility representation of a preference relation of
%$\s_\S$ is \emph{identified} if, for all representations
$(M',\p',\mu')$, non-null contexts $\u$,  and formulas $\phi \in \L^+(\S)$,
$$ (M,\u) \models \phi \qquad \text{ if and only if} \qquad (M',\u)
  \models \phi.$$ 
%joe5
%  for all $\phi \in \L^+(\S)$.
\end{definition}

\begin{theorem}
\label{thm:uniq}
%joe5
%Let $(M,\p,\mu)$ be a a subjective causal utility representation of
%$\s_\S$. Then $M$ is \emph{identified} if and only if $\s_\S$ satisfies
A subjective causal utility representation $(M,\p,\mu)$ of
$\s_\S$ is \emph{identified} if and only if $\s_\S$ satisfies 
Axiom A\ref{ax:defp}$^*$.
\end{theorem}

}

\fullv{
\section{Proof of Theorems \ref{thm:representation} and \ref{thm:uniq}}
}
\shortv{
\section{Proof of Theorem \ref{thm:representation}}
}
\label{sec:repPrf}
%joe1: expanded this
%Before getting into it, the high-level logic of the proof is as
%follows:  For each atom $\vec{a} \in \at$ we will define a strict
\fullv{
Before getting into the details of the proof, we describe it at a
higher level.  
}
%evan6
\shortv{
    This section contains an overview of the proof of our
    representation theorem. Some details of the proof, 
%joe7: added comma
%namely proofs of intermediate lemmas, ]
namely, proofs of intermediate lemmas, 
    are omitted and can be found in the appendix.
    We begin by describing the proof at a higher level.
}

Given a preference order $\s_{\S}$, the first step is to
construct a Lewis-style model $M_{\s_{\S}}$ that represents
  $\s_\S$.  We want $M_{\s_{\S}}$ to be recursive in a
precise sense.
%joe1: back to what you wrote, more or less
%For each atom $\vec{a} \in \at$ we define a strict
So for each atom $\vec{a} \in \at$, we define a strict
linear order $<_{\vec{a}}$ of $\at$, intended to represent the
closeness  of atoms to $\vec{a}$.
We then show that, for each atom
$\vec{a}$, intervening to set $\svc{Y}$ has the same consequence as
intervening to set \emph{all} variables to their values in the
$<_{\vec{a}}$-closest atom to $\vec{a}$ in which $\vec{Y}=\vec{y}$
holds (see Lemma \ref{lem:sameIntervention}).  This is the only property that
%joe7
%Bjorndahl and Halpern \nciteyear{BH21} (BH from now on) used to prove their
BH used to prove their
representation theorem, which used Lewis-style models (see their Lemma
5 and Theorem 1).  It follows that we get the desired Lewis-style
representation of  
$M_{\s_{\S}}$.  Moreover, in  $M_{\s_{\S}}$, $<_{\vec{a}}$
does in fact represent the closeness of atoms to $\vec{a}$. These
observations allow us to appeal to results of Halpern
\nciteyear{Hal40}, and convert $M_{\s{\S}}$ to a causal model that
also represents $\s_\S$.   

So, to begin, we want to define $<_{\vec{a}}$. 
Axiom A\ref{ax:rec} ensures that we can extend
%joe2*
%$\leadsto_{\vec{a}}$ to a strict linear order on $\U \cup \V$, denoted
%$\dot{\leadsto}_{\vec{a}}$, such that for all $Y \in
$\leadsto$ (and hence $\leadsto_{\vec{a}}$ for each atom $\vec{a}$) 
to a strict linear order on $\U \cup \V$, denoted
$\dot{\leadsto}$, such that for all $U \in
\U$ and $X \in \V$ we have $U \dot{\leadsto} X$. For atoms
%joe2
%$\vec{a}, \vec{b}, \vec{c} \in
%\at$, let $Y_{\vec{a},\vec{b}, \vec{c}} \in \U\cup
%\V$ be the $\dot{\leadsto}_{\vec{a}}$-minimal variable on which
$\vec{b}, \vec{c} \in
\at$, let $Y_{\vec{b}, \vec{c}} \in \U\cup
\V$ be the $\dot{\leadsto}$-minimal variable on which
%joe7
%$\vec{b}$ and $\vec{c}$ disagree. For each $X \in \U \cup \V$ define
$\vec{b}$ and $\vec{c}$ disagree. For each variable $X \in \U \cup \V$, define
$\ll^X_{\vec{a}}$ to be some fixed strict linear order of $\Rn(X)$
whose minimal element is the value of $X$ in atom $\vec{a}$.
%joe1
%We can now define $<_{\vec{a}}$ to be a, loosely speaking,
We define $<_{\vec{a}}$ to be, loosely speaking, a
lexicographic order over atoms, ordering first over variables, using
%joe2
%$\dot{\leadsto}_{\vec{a}}$, and then over values, using
$\dot{\leadsto}$, and then over values, using
%joe1
%$\ll_{\vec{a}}$. We discuss the intuition of this order after its
$\ll_{\vec{a}}$. We discuss the intuition behind this order after
giving its
definition: 

%joe1
%Fix $<_{\vec{a}}$ to be some strict linear order over $\at$ such that
Take $<_{\vec{a}}$ to be a strict linear order over $\at$ such that
%joe1 removed break
%
%joe7
%\begin{enumerate}[label=\textsc{or\arabic*}]
\begin{enumerate}[label=\textsc{or\arabic*.}]
\item  \label{o.1}  $\vec{b} <_{\vec{a}} \vec{c}$ if
%joe1
      %      $Y(\dot{\leadsto}_{\vec{a}}, \vec{a}, \vec{c})
%      \dot{\leadsto}_{\vec{a}} Y(\dot{\leadsto}_{\vec{a}}, \vec{a},
%      \vec{b})$, and 
%    \item  \label{o.2} If $Y(\dot{\leadsto}_{\vec{a}}, \vec{a},
%      \vec{b}) = Y(\dot{\leadsto}_{\vec{a}}, \vec{a}, \vec{c}) \in
%      \V$, then let $Y$ denote $Y(\dot{\leadsto}_{\vec{a}}, \vec{b},
%      $Y_{\dot{\leadsto}_{\vec{a}}, \vec{a}, \vec{c}}
%      \vec{c}}$, let $\vec{Z}$ be the set of endogenous
%      variables (strictly) $\dot{\leadsto}_{\vec{a}}$-less than $Y$,
%      and $\vec{z} = \vec{b}|_{\vec{Z}} = \vec{c}|_{\vec{Z}}$ the
%      common values over these variables. Then $\vec{b} <_{\vec{a}}
%      \vec{c}$ if 
%joe2
% $Y_{\vec{a}, \vec{a}, \vec{c}} \dot{\leadsto}_{\vec{a}} Y_{\vec{a}, \vec{a}, 
 $Y_{\vec{a}, \vec{c}} \ \dot{\leadsto} Y_{\vec{a}, 
      \vec{b}}$, and 
%joe2
      % \item  \label{o.2} If $Y_{\vec{a}, \vec{a}, \vec{b}} = Y_{\vec{a},
 \item  \label{o.2} If $Y_{ \vec{a}, \vec{b}} = Y_{
      \vec{a}, \vec{c}}$, 
%joe2
        %        then let $Y$ denote $Y_{\vec{a}, \vec{b}, \vec{c}}$,
                then let $Y$ denote $Y_{ \vec{b}, \vec{c}}$,
        let $\vec{Z}$ be the set of endogenous
%joe2
%        variables (strictly) $\dot{\leadsto}_{\vec{a}}$-less than  $Y$,
                variables (strictly) $\dot{\leadsto}$-less than $Y$,
      and let $\vec{z} = \vec{b}|_{\vec{Z}} (= \vec{c}|_{\vec{Z}})$.
      Then $\vec{b} <_{\vec{a}} 
      \vec{c}$ if 
     $\d[\svc{Z}\,] \fixes_{\vec{a}} \vec{b}|_{Y}$ and either
        \begin{enumerate}[label=(\roman*)]
             \item \label{o.2b} not $\d[\svc{Z}\,] \fixes_{\vec{a}} \vec{c}|_{Y}$, or
            \item \label{o.2a} $\d[\svc{Z}\,] \fixes_{\vec{a}}
              \vec{c}|_{Y}$ and
%joe1
%              $\vec{b}|_{Y} \ll^{Y} _{\vec{a}} \vec{c}|_{Y}$ 
              $y_{\vec{b}} \ll^{Y} _{\vec{a}} y_{\vec{c}}$, where
              $y_{\vec{b}}$ (resp., $y_{\vec{c}}$ denotes the value of
              $Y$ in $\vec{b}$ (resp., $\vec{c}$).
              % \item \label{o.2a} $\vec{b}|_{Y} = \vec{a}|_{Y}$
         \end{enumerate} 
\end{enumerate}
%evan1*: we use the \vec{b}|_{Y} notation in the centeredness axiom,
%why not use it here?
%joe2: fine with me
%joe1
%Of course, this does not totally determine the order, as there may be
In general, \eqref{o.1} and \eqref{o.2} do not completely determine the
order; there may be 
atoms for which neither \eqref{o.1} nor \eqref{o.2} hold. The
remainder of the order can be completed arbitrarily. 

%joe1
%A few points of intuition regarding this order. We can think of
%$\vec{a}$ as encoding a situation, a context $\u$ and model $M$: the
%final assignment of variables is $\vec{a}$, and we can probe the
%structural equations through the effect of interventions via the
%operator $\fixes_{\vec{a}}$.
%
%When considering which is closer to $\vec{a}$, either $\vec{b}$ or
%$\vec{c}, there are two criteria of lexicographic importance: (i)
A few points of intuition regarding this order. In the Lewis-style
model $M_{\s_\S}$ that we construct, the states are represented
%joe2
%by atoms.   The operator $\fixes_{\vec{a}}$ lets us probe the
by pairs of atoms.   The operator $\fixes_{\vec{a}}$ lets us probe the 
structural equations through the effect of interventions.
When considering which of $\vec{b}$ or $\vec{c}$ is closer to $\vec{a}$
according to $\fixes_{\vec{a}}$, 
there are two criteria of lexicographic importance: (i)
%joe1
%which of $\vec{b}$ or $\vec{c}$ coincides with $\vec{a}$ for longer
%(\emph{longer} as in for more variables, starting with the exogenous
% context and proceeding via $\leadsto_{\vec{a}}$'')  (ii) if both
%$\vec{b}$ or $\vec{c}$ both deviate from $\vec{a}$ at the same
which of $\vec{b}$ or $\vec{c}$ coincides with $\vec{a}$ longer
(where \emph{longer} means ``for more variables, starting with the exogenous
%joe2
%context and proceeding via $\leadsto_{\vec{a}}$''), and  (ii) if both
context and proceeding via $\dot{\leadsto}$''), and  (ii) if both
$\vec{b}$ and $\vec{c}$ deviate from $\vec{a}$ at the same
%joe1
%variable, does one coincide with the counterfactual assessment of
variable $Y$, does one coincide with the counterfactual assessment of
%joe1*: expanded
%$\vec{a}$ as provided by $\fixes_{\vec{a}}$. These two criteria are
$\vec{a}$ given by $\fixes_{\vec{a}}$, and if both do, is one closer
to $\vec{a}$ than the other according to $\ll^{Y} _{\vec{a}}$.
These two criteria are
%joe1
%reflected by \eqref{o.1} and \eqref{o.2}, respectively.
captured by \eqref{o.1} and \eqref{o.2}, respectively. 

%joe1: we need an exogenous variable
%evan1: why do we need an exog var? Cannot U be empty?
%joe2: In general, I've assume the \U is always nonempty (if nothing
%else, it's useful to have a space on which you can put a
%probability), although it's not critical.
%An example may help clarify: Let $\vec{a} = [X=0,Y=0,Z=0]$, and assume
%that the ordering on variables $\leadsto_{\vec{a}}$ is $X
An example may help clarify: Let $\vec{a}$ be the atom $U=0 \land X=0
\land Y=0 
\land Z=0$, and assume
%joe2
%that the order on variables $\leadsto_{\vec{a}}$ is $U \leadsto_{\vec{a}} X
%\leadsto_{\vec{a}} Y \leadsto_{\vec{a}} Z$. Now consider the following
that the order  $\dot{\leadsto}$ on variables is $U \dot{\leadsto} X
\dot{\leadsto} Y \dot{\leadsto} Z$. Now consider the following
%joe2
%other atoms:
atoms: 
\begin{align*}
%joe1: added U, wrote as conjunction
%  \vec{b} &= [X=0,Y=0,Z=1] \\
%    \vec{c} &= [X=0,Y=1,Z=1] \\
  %    \vec{c}^{\, \prime}& = [X=0,Y=1,Z=0]
  \vec{b} &= U= 0 \land X=0 \land Y=0 \land Z=1 \\
    \vec{c} &= U= 0 \land X= 0 \land Y=1 \land Z=1 \\
    \vec{c}^{\, \prime}& = U= 0 \land X=0 \land Y=1 \land Z=0.
\end{align*}
%joe1
%
%Specifically: \eqref{o.1} states that if an atom coincides for longer,
%it is closer, so $\vec{b}$ is closer than $\vec{c}$ or $\vec{c}^{\, 
\eqref{o.1} states that if an atom coincides with $\vec{a}$ longer,
it is closer to $\vec{a}$, so $\vec{b}$ is closer to $\vec{a}$ than either
$\vec{c}$ or $\vec{c}^{\, \prime}$.
%joe1*: Evan, I'm lost here.  Why do we care what happens when Y=1.
%We want to apply OR2.  In his case, the \vec{Z} from OR2 is {X,Y}, so we
%care about  which value z is such  $do[X = 0,Y=1] ->_a Z=x.  I
%rewrote it to say that; please check!
%  \prime}$. However, both $\vec{c}$ and $\vec{c}^{\, \prime}$ deviate 
%at the same variable (i.e., $Y$) so their closeness to $\vec{a}$
  %depends, counter-factually, what should happen when $Y = 1$.  In
%  In particular, if 
%$$
%[Y \leftarrow 1] \fixes_{\vec{a}} (Z = 1) \text{ and NOT } [Y
%\leftarrow 1] \fixes_{\vec{a}} (Z = 0) 
%$$
%then \eqref{o.2}(i) states that $\vec{c}$ is closer to $\vec{a}$ than
%$\vec{c}^{\, \prime}$. Intuitively, this is because, according to
%$\fixes_{\vec{a}}$, the structural model encoded by $\vec{a}$ is
%consistent with $X=0, Y=1, Z=1$ but not $X=0, Y=1, Z=0$. Because of
%the possibility of indifference, however, it is possible that the
%preferences fail to rule out either $Z=1$ or $Z=0$---in this case, the
Since $\vec{c}$ and $\vec{c}^{\, \prime}$ agree up to $Z$,
%joe2
%their closeness to $\vec{a}$ depends depends on the value of $z$ such that
their closeness to $\vec{a}$ depends on the value(s) of $z$ such that
$\d[X \gets 0, Y \gets 1] \fixes_{\vec{a}} (Z=z)$.  In particular, if
we can have $z=1$ (which is the value of $Z$ according to $\vec{c}$)
and not $z=0$, then 
$\vec{c}$ is closer to $\vec{a}$ that $\vec{c}^{\, \prime}$.  Intuitively, this
is because, according to
$\fixes_{\vec{a}}$, if we set $X$ to 0 and $Y$ to 1, $Z$ would be 1,
so $\vec{c}$ is consistent with the equations we plan to use in the
context encoded by $\vec{a}$, while $\vec{c}^{\, \prime}$ is not.
Similarly, if we can have $z=0$ and not $z=1$, then $\vec{c}^{\, \prime}$
is closer.  Because of the possibility of indifference, both $z=0$ and
$z=1$ could be consistent with $\fixes_{\vec{a}}$.  In this case, the
order $\ll^{Y} _{\vec{a}}$ serves as a consistent method of
breaking ties.\footnote{$\ll^{Y} _{\vec{a}}$ is arbitrary except that
%joe1
%$\vec{a}|_Y$ is the initial element, reflecting the fact that beyond
%coinciding with $\vec{a}$ if possible, this tie breaking
its initial element is the value of $Y$ in $\vec{a}$, reflecting the
fact that beyond 
coinciding with $\vec{a}$ if possible, this tie-breaking is
arbitrary. Nonetheless, an order still needs to be fixed to ensure
tie-breaking is consistent across different interventions.} 
 
 \bigskip

%joe1: again, I don't like this notation.  Made it a subscript
%globally.  Also moved this out of the lemma, because we use it elsewhere
%Let $\vec{a}(\svc{Y})$ denote the $<_{\vec{a}}$-minimal atom satisfying
%evan3: changed \vec{a}_{\svc{Y}} to \vec{a}_{\d[\svc{Y}]} 
% becuase we want to define \vec{a}_h(\codt) and somepoint, ahhhh!
%Let $\vec{a}_{\svc{Y}}$ denote  the $<_{\vec{a}}$-minimal atom satisfying
Let $\vec{a}_{\d[\svc{Y}]}$ denote  the $<_{\vec{a}}$-minimal atom satisfying
%joe1*: I think you meant \vec{Y} = \vec{y}$, not $\vec{a}|_{\vec{Y}}
%= \vec{y}$.  Please check!
%$\vec{a}|_{\vec{Y}} = \vec{y}$.
$\vec{Y}= \vec{y}$.
 
\begin{lemma}
\label{lem:minatom}
%joe1: again, I don't like this notation.  Made it a subscript globally
%Let $\vec{a}(\svc{Y})$ be the $<_{\vec{a}}$-minimal atom satisfying
%joe1*: I think you meant \vec{Y} = \vec{y}$, not $\vec{a}|_{\vec{Y}}
%= \vec{y}$.  Please check!
%$\vec{a}|_{\vec{Y}} = \vec{y}$. Fix any $X \in \V \setminus \vec{Y}$
%and let $\vec{Z}_X \subset \V$ denote the (possibly empty) set of
Suppose that $\vec{Y}$ is a set of endogenous variables, $X \in \V
\setminus \vec{Y}$,  
and $\vec{Z}_X \subseteq \V$ is the (possibly empty) set of
%joe2
%endogenous variables strictly $\dot{\leadsto}_{\vec{a}}$-less than
endogenous variables strictly $\dot{\leadsto}$-less than
%joe1
%$X$, and $\vec{z}_X$ denote the values of this atom over $\vec{Z}_X$:
%$\vec{z}_X = \vec{a}_{\svc{Y}|}|_{\vec{Z}_X}$.
$X$.  
Then we have
% We have that $\vec{a}(\svc{Y})|_{X}$, the $X^{th}$ component of
% $\vec{a}(\svc{Y})$, satisfies
%joe7
%\begin{enumerate}[label=\textsc{M\arabic*}]
\begin{enumerate}[label=\textsc{M\arabic*.}]
    % \item \label{m.minY}  if $X \in \vec{Y}$: then $\vec{a}(\svc{Y})|_{X} = \vec{y}|_{X}$,
    \item \label{m.minY} $\vec{a}_{\d[\svc{Y}]}|_{\U} = \vec{a}|_{\U}$, and
%joe1*
      %    \item \label{m.fixes} $\vec{a}_{\d[\svc{Y}]}|_{X}$ is the
          \item \label{m.fixes} The value of $X$ in $\vec{a}_{\d[\svc{Y}]}$ is the
      $\ll^{X}_{\vec{a}}$-minimal element of  
%joe2
            %            $$m(\vec{a}, X) = \big\{ x \in \Rn(X) \mid
%evanIJCAI:   
%                         $$m(\vec{a}, X) = \big\{ x \in \Rn(X) :
% %joe1
% %      \d[\vec{Z}_X \leftarrow \vec{z}_X] \fixes_{\vec{a}} (X= x)\big\}.$$ 
%       % \d[\vec{a}_{\d[\svc{Y}]}|_{\vec{Z}_X} \fixes_{\vec{a}} (X= x)\big\}.$$ 
% %evan1
%             \d[\vec{Z}_X \gets \vec{a}_{\d[\svc{Y}]}|_{\vec{Z}_X}]
%             \fixes_{\vec{a}} (X= x)\big\}.$$  
      % \item \label{m.a} if $\d[\svc{Z}\,] \fixes_{\vec{a}} (X= \vec{a}|_X)$, then $\vec{a}_{\d[\svc{Y}]}|_{X} = \vec{a}|_{X}$,
      \begin{gather*}
      m(\vec{a}, X) = \\ \big\{ x \in \Rn(X) : 
            \d[\vec{Z}_X \gets \vec{a}_{\d[\svc{Y}]}|_{\vec{Z}_X}]
            \fixes_{\vec{a}} (X= x)\big\}.
      \end{gather*}

\end{enumerate}
\end{lemma}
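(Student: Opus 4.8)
The plan is to characterize $\vec{a}_{\d[\svc{Y}]}$ explicitly by a greedy construction along $\dot{\leadsto}$ and then verify that the construction yields the $<_{\vec{a}}$-minimal atom. Define a candidate $\vec{a}^*$ by processing the variables in $\dot{\leadsto}$-order: set $\vec{a}^*|_{\U} = \vec{a}|_{\U}$; set $\vec{a}^*|_X = \vec{y}|_X$ for $X \in \vec{Y}$; and for each endogenous $X \notin \vec{Y}$, set $\vec{a}^*|_X$ to be the $\ll^X_{\vec{a}}$-minimal element of $m(\vec{a},X)$, where the conditioning set $\vec{Z}_X$ carries the already-determined values $\vec{a}^*|_{\vec{Z}_X}$. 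This recursion is well-founded because $\vec{Z}_X$ contains only variables strictly $\dot{\leadsto}$-below $X$, and it is well-defined because Axiom A\ref{ax:def} (Definiteness), applied to the intervention $\d[\vec{Z}_X \gets \vec{a}^*|_{\vec{Z}_X}]$, guarantees $m(\vec{a},X) \neq \emptyset$. Since $\vec{a}^*$ satisfies $\vec{Y}=\vec{y}$ and obeys M1--M2 by construction, it suffices to show $\vec{a}^* = \vec{a}_{\d[\svc{Y}]}$, i.e.\ that $\vec{a}^* <_{\vec{a}} \vec{b}$ for every atom $\vec{b} \neq \vec{a}^*$ with $\vec{b}|_{\vec{Y}} = \vec{y}$.

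Fix such a $\vec{b}$ and let $W = Y_{\vec{a}^*,\vec{b}}$ be the $\dot{\leadsto}$-minimal variable on which they disagree; note $W \notin \vec{Y}$, since $\vec{a}^*$ and $\vec{b}$ coincide on $\vec{Y}$. The argument splits on the nature of $W$. If $W \in \U$, then every variable $\dot{\leadsto}$-below $W$ is exogenous, $\vec{a}^*$ matches $\vec{a}$ on all of $\U$, and $\vec{b}$ agrees with $\vec{a}^*$ (hence with $\vec{a}$) strictly below $W$ but differs from $\vec{a}$ at $W$; thus $Y_{\vec{a},\vec{b}} = W \,\dot{\leadsto}\, Y_{\vec{a},\vec{a}^*}$ and $\vec{a}^* <_{\vec{a}} \vec{b}$ by \eqref{o.1}. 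This case simultaneously establishes M1.

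Now suppose $W$ is endogenous, and let $P$ be the set of variables $\dot{\leadsto}$-below $W$, on which $\vec{a}^*$ and $\vec{b}$ coincide. If $\vec{a}^*$ agrees with $\vec{a}$ throughout $P$, then $\vec{a}^*|_{\vec{Z}_W} = \vec{a}|_{\vec{Z}_W}$, so Centeredness (Axiom A\ref{ax:cent}) gives $\vec{a}|_W \in m(\vec{a},W)$; since $\vec{a}|_W$ is the $\ll^W_{\vec{a}}$-least value, the greedy rule forces $\vec{a}^*|_W = \vec{a}|_W$, so $\vec{b}$ is the atom that first departs from $\vec{a}$ at $W$ and again $\vec{a}^* <_{\vec{a}} \vec{b}$ by \eqref{o.1}. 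Otherwise $\vec{a}^*$ (hence $\vec{b}$) already departs from $\vec{a}$ at a common variable inside $P$, so $Y_{\vec{a},\vec{a}^*} = Y_{\vec{a},\vec{b}}$ and we are in the regime of \eqref{o.2} with $Y = W$, $\vec{Z} = \vec{Z}_W$, and $\vec{z} = \vec{a}^*|_{\vec{Z}_W} = \vec{b}|_{\vec{Z}_W}$. By construction $\vec{a}^*|_W \in m(\vec{a},W)$, so $\d[\svc{Z}] \fixes_{\vec{a}} \vec{a}^*|_W$; if $\vec{b}|_W \notin m(\vec{a},W)$ then clause \eqref{o.2b} applies, while if $\vec{b}|_W \in m(\vec{a},W)$ then both values are fixed and $\ll^W_{\vec{a}}$-minimality of $\vec{a}^*|_W$ gives $\vec{a}^*|_W \ll^W_{\vec{a}} \vec{b}|_W$, so clause \eqref{o.2a} applies. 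Either way $\vec{a}^* <_{\vec{a}} \vec{b}$, which exhausts the cases and yields $\vec{a}^* = \vec{a}_{\d[\svc{Y}]}$, establishing M1 and M2.

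The main obstacle is the final endogenous case, where the two clauses of \eqref{o.2} must be matched precisely against the definition of $m(\vec{a},X)$: the crucial structural point is that, once the prefix $\vec{Z}_W$ is pinned down, the greedy value is exactly the $\ll^W_{\vec{a}}$-least value selected by $\fixes_{\vec{a}}$, so any competitor either fails to be fixed (losing via \eqref{o.2b}) or is fixed but $\ll^W_{\vec{a}}$-larger (losing via \eqref{o.2a}). The two supporting facts that make this go through cleanly are Definiteness, which ensures $m(\vec{a},X)$ is nonempty so the greedy choice always exists, and Centeredness, which pins the greedy value to $\vec{a}|_W$ exactly when the prefix still agrees with $\vec{a}$, reconciling the \eqref{o.2} comparison with the \eqref{o.1} comparisons used in the other cases.
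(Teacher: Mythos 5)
Your proof is correct, and it uses the same basic ingredients as the paper's (Definiteness for nonemptiness of $m(\vec{a},X)$, Centeredness to pin the value to $\vec{a}|_X$ when the prefix agrees with $\vec{a}$, and the two clauses \eqref{o.1} and \eqref{o.2}), but it is organized in the opposite direction. The paper never constructs a candidate atom: it takes $\vec{a}_{\d[\svc{Y}]}$ as given (the minimum of a finite strict linear order) and argues by contradiction, showing that if \eqref{m.minY} or \eqref{m.fixes} failed, then flipping a \emph{single} coordinate (the exogenous block, or the offending variable $X$) would produce an atom still satisfying $\vec{Y}=\vec{y}$ but $<_{\vec{a}}$-smaller; its two cases, $\vec{a}|_{\vec{Z}_X} = \vec{a}_{\d[\svc{Y}]}|_{\vec{Z}_X}$ versus $\vec{a}|_{\vec{Z}_X} \neq \vec{a}_{\d[\svc{Y}]}|_{\vec{Z}_X}$, correspond exactly to your subcases using Centeredness-plus-\eqref{o.1} and \eqref{o.2}\ref{o.2b}/\ref{o.2a}, respectively. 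You instead construct the greedy atom $\vec{a}^*$ along $\dot{\leadsto}$ and verify global minimality against \emph{every} competitor, which forces the additional case analysis on the location of the first disagreement $W$. What your route buys: an explicit algorithmic description of $\vec{a}_{\d[\svc{Y}]}$, and a clean resolution of the mild self-reference in \eqref{m.fixes} (where $m(\vec{a},X)$ is defined in terms of $\vec{a}_{\d[\svc{Y}]}$ itself) by exhibiting the fixed point directly. What the paper's route buys: economy, since only a one-coordinate modification of the minimal atom ever needs to be compared, and no global consistency of a greedy construction needs checking. One small point, shared by both arguments and worth making explicit: when the constructed (or modified) atom coincides with $\vec{a}$ itself, $Y_{\vec{a},\vec{a}^*}$ is undefined and \eqref{o.1} does not literally apply, so one needs the convention that $\vec{a}$ is the $<_{\vec{a}}$-minimum; this is consistent with (and effectively built into) the paper's later requirement that $\langle \vec{a},\vec{a}\rangle$ be $\sqsubset_{\pair{a}}$-minimal.
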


\fullv{
 \begin{proof}
%joe2
%   Since $\U$ forms the initial segment of $\dot{\leadsto}_{\vec{a}}$, it
   Since $\U$ forms the initial segment of $\dot{\leadsto}$, it
   %joe1
%is immediate from \eqref{o.1} that \eqref{m.minY} must hold.
is follows from \eqref{o.1} that \eqref{m.minY} must hold.
%joe1: added
In more detail, if $\vec{a}_{\d[\svc{Y}]}|_{\U} \ne \vec{a}|_{\U}$, then
let $\vec{b}$ be such that $\vec{a}_{\d[\svc{Y}]}|_{\V} = \vec{b}|_{\V}$
and $\vec{b}_{\U} = \vec{a}|_{\U}$.  Since $\U$ forms the initial
%joe2
%segment of $\dot{\leadsto}_{\vec{a}}$, it follows from \eqref{o.1}
segment of $\dot{\leadsto}$, it follows from \eqref{o.1}
that $\vec{b} <_{\vec{a}} \vec{a}_{\d[\svc{Y}]}|_{\U}$, yet
$\vec{b}$ satisfies  $\vec{Y} = \vec{y}$. This is a contradiction.

%joe1
%Now, assume by way of contradiction, that at $X$  \eqref{m.fixes} did
%not hold.  Consider $\vec{b} \in \at$ to be the atom that coincides with
%$\vec{a}_{\d[\svc{Y}]}$ for all variables except $X$, where it is the
%$\ll^{X}_{\vec{a}}$-minimal element of $m(\vec{a}, X)$ (this set is
%non-empty by Axiom \ref{ax:def}).  
 To see that \eqref{m.fixes} holds, suppose by way of contradiction
 that it does not hold.  Then there exists a variable $X$ such that
the value of $X$ in  $\vec{a}_{\d[\svc{Y}]}$ is not the
$\ll^{X}_{\vec{a}}$-minimal element of
$m(\vec{a}, X)$ (note that $m(\vec{a}, X)$ is non-empty by Axiom
 A\ref{ax:def}).   
Let $\vec{b}$ be the atom that coincides with
$\vec{a}_{\d[\svc{Y}]}$ for all variables except $X$, and the value of $X$
in $\vec{b}$ is the
$\ll^{X}_{\vec{a}}$-minimal element of $m(\vec{a}, X)$.
There are two cases: 
 \begin{itemize}
 \item
   $\vec{a}|_{\vec{Z}_X} = \vec{a}_{\d[\svc{Y}]}|_{\vec{Z}_X}$:
%joe1
   %By Axiom \ref{ax:cent}, we have that $\d[\vec{Z}_X \leftarrow
%    \vec{z}_X] \fixes_{\vec{a}} (X= \vec{a}|_{X} )$ and by the
%       properties of $\ll^{X}_{\vec{a}}$, $\vec{a}|_{X}$ is the
%       $\ll^{X}_{\vec{a}}$-minimal element of $m(\vec{a}, X)$. In
%       particular, our assumption that \eqref{m.fixes} does not hold
%       implies that $\vec{a}_{\d[\svc{Y}]}|_{X}  \neq \vec{a}|_{X}$, so
%       $\vec{b} <_{\vec{a}} \vec{a}_{\d[\svc{Y}]}$, by \eqref{o.1}. 
   From Axiom A\ref{ax:cent}, it follows that $\d[\vec{Z}_X \leftarrow
    \vec{z}_X] \fixes_{\vec{a}} \vec{a}|_{X} $. By the definition of
       $\ll^{X}_{\vec{a}}$, the value of $X$ in $\vec{a}$ is the
       $\ll^{X}_{\vec{a}}$-minimal element of $m(\vec{a}, X)$. By
   assumption, the value of $X$ in  $\vec{a}_{\d[\svc{Y}]}|_{X}$ is not the
   $\ll^{X}_{\vec{a}}$-minimal element of $m(\vec{a}, X)$, while the
   value of $X$ in $\vec{b}$ is the  $\ll^{X}_{\vec{a}}$-minimal
   element of $m(\vec{a}, X)$.  It follows that $\vec{b}  <_{\vec{a}}
   \vec{a}_{\d[\svc{Y}]}$, a contradiction.
%joe1*: I assume that all instances of \vec{Z} should \vec{Z}_X.
%Please confirm.
%     \item $\vec{a}|_{\vec{Z}} \neq
%       \vec{a}_{\d[\svc{Y}]}|_{\vec{Z}}$: Since $\vec{a}|_{\vec{Z}} \neq
%       \vec{a}_{\d[\svc{Y}]}|_{\vec{Z}} = \vec{b}|_{\vec{Z}}$, we have
     \item $\vec{a}|_{\vec{Z}_X} \neq
       \vec{a}_{\d[\svc{Y}]}|_{\vec{Z}_X}$: Since $\vec{a}|_{\vec{Z}_X} \neq
       \vec{a}_{\d[\svc{Y}]}|_{\vec{Z}_X} = \vec{b}|_{\vec{Z}_X}$, we have
%joe1
%       $Y(\dot{\leadsto}_{\vec{a}}, \vec{a}, \vec{a}_{\d[\svc{Y}]}) =
%        Y(\dot{\leadsto}_{\vec{a}}, \vec{a}, \vec{b})$, so by
%         \ref{o.2}, $\vec{b} <_{\vec{a}} \vec{a}_{\d[\svc{Y}]}$. 
%
% %\end{itemize}
%       In either case we have found an atom $\vec{b}$ satisfying
%       $\vec{a}|_{\vec{Y}} = \vec{y}$ that is $<_{\vec{a}}$-less than
%       $\vec{a}_{\d[\svc{Y}]}$, a contradiction. 
%       In either case we have found an atom $\vec{b}$ satisfying
%       $\vec{a}|_{\vec{Y}} = \vec{y}$ that is $<_{\vec{a}}$-less than
%       $\vec{a}_{\d[\svc{Y}]}$, a contradiction. 
       $Y_{\vec{a}, \vec{a}, \vec{a}_{\d[\svc{Y}]}} =
       Y_{\vec{a}, \vec{a}, \vec{b}}$, so by
         \ref{o.2}, $\vec{b} <_{\vec{a}} \vec{a}_{\d[\svc{Y}]}$, and again
         we have a contradiction. 
\end{itemize}
  \end{proof}

\begin{lemma}
  \label{lem:extendVars}
  %  For any atom $\vec{a} \in \at$ and (disjoint) vectors of variables
%$\vec{Z}, \vec{Y}$ and values $\vec{z}$,$\vec{y}$, and variable $X
  For all atoms $\vec{a} \in \at$, (disjoint) vectors of endogenous variables
$\vec{Y}$ and  $\vec{Z}$, $\vec{y} \in \Rn(\vec{Y})$,
  $\vec{z} \in \Rn(\vec{Z})$, endogenous variables $X  
\notin \vec{Z}\cup\vec{Y}$ such that for all $Y\in \vec{Y}$, $X
%joe1
%\dot{\leadsto}_{\vec{a}} Y$, we have, for $x \in \Rn(X)$, 
%joe2
%\dot{\leadsto}_{\vec{a}} Y$, and $x \in \Rn(X)$, we have
\dot{\leadsto} Y$, and $x \in \Rn(X)$, we have
%evanIJCAI
% $$
% \d[\svc{Z}\,] \fixes_{\vec{a}} (X= x) \text{ iff } \d[\svc{Z},
% %joe1
%   %  \svc{Y}] \fixes_{\vec{a}} (X= x)
%     \svc{Y}] \fixes_{\vec{a}} (X= x). 
% $$
\begin{align*}
&\d[\svc{Z}\,] \fixes_{\vec{a}} (X= x) \text{ iff } \\
&\d[\svc{Z},\svc{Y}] \fixes_{\vec{a}} (X= x). 
\end{align*}

\end{lemma}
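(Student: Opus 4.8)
The plan is to reduce the vector statement to the single-variable notion of unaffectedness built into the definition of $\leadsto$, and then induct on the number of variables in $\vec{Y}$. The key observation is that the hypothesis $X \dot{\leadsto} Y$ for every $Y \in \vec{Y}$ forces $X$ to be \emph{unaffected} by each such $Y$ at the atom $\vec{a}$. Indeed, $\dot{\leadsto}$ is a strict linear order extending $\leadsto = \cup_{\vec{b}} \leadsto_{\vec{b}}$, so if we had $Y \leadsto X$ (equivalently $Y \leadsto_{\vec{b}} X$ for some atom $\vec{b}$), then $Y \dot{\leadsto} X$, contradicting $X \dot{\leadsto} Y$ by asymmetry of the strict order. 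Hence $Y \not\leadsto_{\vec{a}} X$, and unwinding the definition in \eqref{eq:leadstocontra}, this means that for every $\vec{W} \in \V \setminus \{X,Y\}$, every $\vec{w} \in \Rn(\vec{W})$, every $y$, and every $x$,
\[
\d[\svc{W}] \fixes_{\vec{a}} (X = x) \text{ iff } \d[\svc{W}, Y \gets y] \fixes_{\vec{a}} (X = x).
\]

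With this in hand, I would prove the lemma by induction on $|\vec{Y}|$. The base case $\vec{Y} = \emptyset$ is immediate, as both sides of the claimed equivalence are literally $\d[\svc{Z}] \fixes_{\vec{a}} (X = x)$. For the inductive step, write $\vec{Y} = (\vec{Y}', Y_k)$ with $\vec{Y}' = (Y_1,\ldots,Y_{k-1})$ and corresponding values $\vec{y}'$, and peel off the last coordinate. Applying the unaffectedness of $X$ by $Y_k$ with $\vec{W} = \vec{Z} \cup \vec{Y}'$ and $\vec{w} = (\vec{z},\vec{y}')$ gives
\[
\d[\svc{Z}, \vec{Y}' \gets \vec{y}', Y_k \gets y_k] \fixes_{\vec{a}} (X = x) \text{ iff } \d[\svc{Z}, \vec{Y}' \gets \vec{y}'] \fixes_{\vec{a}} (X = x).
\]
The inductive hypothesis, applied to the shorter vector $\vec{Y}'$ (which still satisfies $X \dot{\leadsto} Y_i$, disjointness from $\vec{Z}$, and $X \notin \vec{Z} \cup \vec{Y}'$), equates the right-hand side with $\d[\svc{Z}] \fixes_{\vec{a}} (X = x)$. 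Chaining the two equivalences, and noting that $\d[\svc{Z},\svc{Y}]$ is just a reordering of the \textbf{do}-list $\d[\svc{Z}, \vec{Y}' \gets \vec{y}', Y_k \gets y_k]$, yields the result.

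The main thing to get right is the bookkeeping in the inductive step: one must verify that the combined vector $\vec{Z} \cup \vec{Y}'$ is a legitimate instantiation of the universally quantified $\vec{W}$ in the definition of ``unaffected by $Y_k$'', which holds precisely because $\vec{Z}$ and $\vec{Y}$ are disjoint, $X \notin \vec{Z}\cup\vec{Y}$, and the $Y_i$ are distinct (so $X, Y_k \notin \vec{Z}\cup\vec{Y}'$). No axiom beyond Recursivity (Axiom A\ref{ax:rec})---used only through its consequence that $\dot{\leadsto}$ extends each $\leadsto_{\vec{a}}$---is needed, and the particular values $\vec{y}$ play no role beyond being held fixed, since the unaffectedness equivalence is universal over the value $y$.
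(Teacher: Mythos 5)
Your proposal is correct and follows essentially the same route as the paper's proof: reduce the hypothesis $X \dot{\leadsto} Y$ to the statement that $X$ is unaffected by each $Y \in \vec{Y}$ given $\vec{a}$ (i.e., $Y \centernot{\leadsto}_{\vec{a}} X$), then induct on the variables of $\vec{Y}$, applying the defining equivalence \eqref{eq:leadstocontra} at each step. The paper states this argument in two sentences; you have merely filled in the details (the asymmetry argument for why $\dot{\leadsto}$ extending $\leadsto$ rules out $Y \leadsto_{\vec{a}} X$, and the bookkeeping that $\vec{Z} \cup \vec{Y}'$ is a valid instantiation of the quantified vector), which is sound.
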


\begin{proof}
%joe1
  % For each $Y \in \vec{Y}$ we have $Y {\centernot{\leadsto}}_{\vec{a}}\,
%  X$. The Lemma then follows from  a simple inductive argument on the
%%variables of $\vec{Y}$ appealing to \eqref{eq:leadstocontra} in each
  By assumption, for all $Y \in \vec{Y}$, we have $Y
 {\centernot{\leadsto}}_{\vec{a}}\, 
  X$. The lemma then follows from  a simple induction argument on the
variables of $\vec{Y}$, appealing to \eqref{eq:leadstocontra}  in each
step. 
\end{proof}
}

%joe1
The following lemma shows that, starting with $\vec{a}$, the effect of
setting $\vec{Y}$ to $\vec{y}$ is the same as setting all the
endogenous variables to the value that results from setting $\vec{Y}$
to $\vec{y}$.  Intuitively, this is because for a variable $X \notin
\vec{Y}$, we are setting it to the value it already has after setting
$\vec{Y}$ to $\vec{y}$, so nothing further changes.

\begin{lemma}
\label{lem:sameIntervention}
%joe1
%For any atom $\vec{a} \in \at$ and vector of variables $\vec{Y}$ and
%values $\vec{y}$, it holds that
For all atoms $\vec{a} \in \at$, vectors of endogenous variables $\vec{Y}$, and
$\vec{y} \in \Rn(\vec{Y})$, we have that 
$$
%joe1: made this change globally
%\textbf{if } [\U \cup \V = \vecc{a}]  \textbf{ then } \d [\svc{Y}]
\textbf{if } \vecc{a}  \textbf{ then } \d [\svc{Y}]
\sim \textbf{if } \vecc{a}  \textbf{ then } \d [\V
  \leftarrow \vec{a}_{\d[\svc{Y}]}|_{\V}].  
$$
%joe1: redundant
%where $\vec{a}_{\d[\svc{Y}]}$ is the $<_{\vec{a}}$-minimal such that
%$\vec{a}|_{\vec{Y}}  = \vec{y}$.  
\end{lemma}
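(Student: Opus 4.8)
The plan is to rewrite the full intervention $\d[\V \leftarrow \vec{a}_{\d[\svc{Y}]}|_{\V}]$ as the result of starting from $\d[\svc{Y}]$ and adding, one variable at a time, the remaining endogenous variables at their $\vec{a}_{\d[\svc{Y}]}$ values; I would then show that each such addition is trivial in the sense of $\fixes_{\vec{a}}$, and chain the resulting indifferences using transitivity of $\sim$.

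Since $\vec{a}_{\d[\svc{Y}]}$ satisfies $\vec{Y} = \vec{y}$, its restriction to $\vec{Y}$ is $\vec{y}$, so the target action equals $\d[\svc{Y}, \vec{X} \leftarrow \vec{a}_{\d[\svc{Y}]}|_{\vec{X}}]$, where $\vec{X} = \V \setminus \vec{Y}$. I would enumerate $\vec{X} = \{X_1, \ldots, X_k\}$ in $\dot{\leadsto}$-increasing order, write $x_i := \vec{a}_{\d[\svc{Y}]}|_{X_i}$, and define $A_0 = \d[\svc{Y}]$ and $A_i = \d[\svc{Y}, X_1 \leftarrow x_1, \ldots, X_i \leftarrow x_i]$, so that $A_k$ is the target action and the base case $\vec{X} = \emptyset$ (i.e., $\V = \vec{Y}$) is immediate.

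The core step is to establish, for each $i$, that $A_{i-1} \fixes_{\vec{a}} (X_i = x_i)$. Let $\vec{Z}_{X_i}$ be the set of endogenous variables strictly $\dot{\leadsto}$-below $X_i$. By the second clause of Lemma~\ref{lem:minatom}, $x_i \in m(\vec{a}, X_i)$, which unwinds to $\d[\vec{Z}_{X_i} \leftarrow \vec{a}_{\d[\svc{Y}]}|_{\vec{Z}_{X_i}}] \fixes_{\vec{a}} (X_i = x_i)$. Now observe that $\vec{Z}_{X_i}$ consists exactly of the $\vec{Y}$-variables below $X_i$ (set to their $\vec{y}$-values) together with $X_1, \ldots, X_{i-1}$ (set to $x_1, \ldots, x_{i-1}$), so $A_{i-1}$ fixes precisely $\vec{Z}_{X_i}$ together with the $\vec{Y}$-variables lying $\dot{\leadsto}$-above $X_i$, all at their $\vec{a}_{\d[\svc{Y}]}$ values. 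Since these additional variables all lie above $X_i$, Lemma~\ref{lem:extendVars} lets me append them without changing whether $X_i = x_i$ is fixed, yielding $A_{i-1} \fixes_{\vec{a}} (X_i = x_i)$.

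Unwinding $\fixes_{\vec{a}}$ into its defining indifference then gives $\textbf{if } \vec{a} \textbf{ then } A_{i-1} \sim \textbf{if } \vec{a} \textbf{ then } A_i$ for each $i$, and transitivity of $\sim$ collapses the chain from $A_0$ to $A_k$, which is exactly the claim. The main obstacle is the bookkeeping in the core step: correctly matching the variable set $\vec{Z}_{X_i}$ supplied by Lemma~\ref{lem:minatom} (which contains only the $\vec{Y}$-variables below $X_i$) against the action $A_{i-1}$ (which fixes all of $\vec{Y}$), and verifying that the discrepancy consists exactly of the $\vec{Y}$-variables above $X_i$ so that Lemma~\ref{lem:extendVars} is applicable. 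Fixing the processing order as $\dot{\leadsto}$-increasing is what makes these sets line up.
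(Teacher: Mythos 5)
Your proposal is correct and takes essentially the same route as the paper: the paper's own proof is an induction along $\dot{\leadsto}$ whose intermediate actions $\d[\svc{Z}, \vec{Y}' \leftarrow \vec{y}^{\,\prime}]$ coincide exactly with your $A_{i-1}$, and each step is justified just as in your argument, by \eqref{m.fixes} of Lemma~\ref{lem:minatom} followed by Lemma~\ref{lem:extendVars} and the definition of $\fixes_{\vec{a}}$. The only difference is presentational: you chain the indifferences $A_0 \sim A_1 \sim \cdots \sim A_k$ and invoke transitivity at the end, whereas the paper carries the indifference with $\d[\svc{Y}]$ as the induction hypothesis.
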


\fullv{
\begin{proof}
%joe1  
  %We will show this by induction on the variables, given the order
  We prove this by induction on the variables in $\V$ with respect to  the order
%joe2
  %  $\dot{\leadsto}_{\vec{a}}$. Fix some $X \in \V\setminus \vec{Y}$,
   $\dot{\leadsto}$. Fix some $X \in \V\setminus \vec{Y}$, 
let $\vec{Z} \subseteq \V$ denote the (possibly empty) vector of
%joe2
%endogenous variables strictly $\dot{\leadsto}_{\vec{a}}$-less than $X$, 
endogenous variables strictly $\dot{\leadsto}$-less than $X$, 
%joe1
%and $\vec{z} = \vec{a}_{\d[\svc{Y}]}|_{\vec{Z}}$ the corresponding values.
and let $\vec{z}$ denote the values of the variables $\vec{Z}$ in
$\vec{a}_{\d[\svc{Y}]}$.
%joe1
%To ease notation, set $x = \vec{a}_{\d[\svc{Y}]}|_{X}$, and $\vec{Y}'
%\subseteq \vec{Y}$ as those variables
%$\dot{\leadsto}_{\vec{a}}$-greater than $X$, and $\vec{y}^{\, \prime}
%
%Then in particular, we will show that if 
To simplify notation, let $x$ denote the value of $X$ in
$\vec{a}_{\d[\svc{Y}]}$, let $\vec{Y}' \subseteq \vec{Y}$ consist of those variables
%joe2
%$\dot{\leadsto}_{\vec{a}}$-greater than $X$, and let $\vec{y}^{\,
$\dot{\leadsto}$-greater than $X$, and let $\vec{y}^{\,
  \prime}$ be the restriction of $\vec{y}$ to $\vec{Y}'$.
We now show that if 
\begin{equation}
\label{eq:induct_hypo2}
\textbf{if } \vec{a}  \textbf{ then } \d [\svc{Y}] \\ \sim \\ \textbf{if } \vec{a}  \textbf{ then } \d [\svc{Z}, \vec{Y}' \leftarrow \vec{y}^{\, \prime}] 
\end{equation}
%joe1
%then also
then 
%evanIJCAI:
% \begin{equation}
% \label{eq:induct_conseq}
% \textbf{if } \vecc{a}  \textbf{ then } \d [\svc{Y}] \sim \textbf{if }
% \vecc{a}  \textbf{ then } \d [\svc{Z}, \vec{Y}' \leftarrow \vec{y}^{\,
% %joe1
%     %    \prime}, X \leftarrow x]
%         \prime}, X \leftarrow x].  
% \end{equation}
\begin{gather}
\label{eq:induct_conseq}
\textbf{if } \vecc{a}  \textbf{ then } \d [\svc{Y}] \nonumber \\
\sim \\ 
\nonumber
\textbf{if } 
\vecc{a}  \textbf{ then } \d [\svc{Z}, \vec{Y}' \leftarrow \vec{y}^{\,
        \prime}, X \leftarrow x].  
\end{gather}
%joe1
%From \eqref{m.fixes}, we have $\d[\svc{Z}\,] \fixes_{\vec{a}} (X=
From \eqref{m.fixes}, it follows that $\d[\svc{Z}\,] \fixes_{\vec{a}} (X=
x)$. Applying Lemma \ref{lem:extendVars}, we obtain 
\begin{equation}
\label{eq:x_fixed_yprime}
\d[\svc{Z}, \vec{Y}' \leftarrow \vec{y}^{\, \prime}] \fixes_{\vec{a}} (X= x).
\end{equation}
%joe1
%So, we have
Thus,
\begin{align*}
&\textbf{if } \vecc{a}  \textbf{ then } \d [\svc{Z}, \vec{Y}' \leftarrow \vec{y}^{\, \prime},X \leftarrow x] \\
\sim \quad &\textbf{if } \vecc{a}  \textbf{ then } \d [\svc{Z},
  \vec{Y}' \leftarrow \vec{y}^{\, \prime}] 
%evanIJCAI: cut for space, added below
% && \text{(from
% %joe1
%   %  \eqref{eq:x_fixed_yprime})} \\
%     \eqref{eq:x_fixed_yprime} and the definition of $\fixes_{\vec{a}}$)} 
\\ 
\sim \quad 
&\textbf{if } \vecc{a}  \textbf{ then } \d [\vec{Y}
%joe1
  %  \leftarrow \vec{y}] && \text{(from \eqref{eq:induct_hypo2})}
    \leftarrow \vec{y}], 
    %evanIJCAI: cut for space, added below
 %&& \text{(from \eqref{eq:induct_hypo2})} 
\end{align*}
where the first indifference comes from from \eqref{eq:x_fixed_yprime}
and the definition of $\fixes_{\vec{a}}$), 
and the second from \eqref{eq:induct_hypo2}.
  \end{proof}
}

%joe1*: I did a lot of significant rewriting here
%We will now appeal to Theorem 1 of \cite{bjorndahl2023sequential}
%(notice their proof uses only the Cancellation (Axiom A\ref{ax:canc})
%joe7
%Bjorndahl and Halpern \nciteyear{BH21} show in their Theorem 1
BH show in their Theorem 1 
that (in our terminology), given a 
preference $\s_\S$  satisfying their Lemma 5 (which is a direct
translation of our Lemma \ref{lem:sameIntervention}) and the Cancellation axiom, and a family
$<_a$ of linear orders, one for each atom, that they can construct a
Lewis-style model that represents 
$\s_\S$.  We outline the construction below, as well as defining
formally what  it means
for a Lewis-style model to represent $\s_\S$.

%joe1*: Evan, we should check this.
The states in the Lewis-style model $M$ are pairs $(\vec{a},\vec{a}')$ of
atoms.  Roughly speaking, since we are trying to capture the effect of
interventions, we can think of $\vec{a}$ as the current world (the
result of performing the intervention) and
$\vec{a}'$ as the world before the intervention was performed.  The
truth of formulas in $\L(\S)$ is completely determined by the first
component. That is, $(M,(\vec{a},\vec{a}'))
\models Y=y$ if $Y=y$ is a conjunct of $\vec{a}$; we extend to
negation, conjunction, and negation in the standard way.  To extend
this semantics to $\L^+(\S)$, we need a closeness relation for each
pair $(\vec{a},\vec{a}')$.  Let $\sqsubset_{\vec{a},\vec{a}'}$ 
be an arbitrary family of strict linear orders on  $\at \times \at$ such that
%joe1* note that we have to have orders indexed by pairs, because we
%need one for every state.
$$
\pair{b} \sqsubset_{\pair{a}} \pair{c} \text{ whenever } \vec{b}
<_{\vec{a}} \vec{c} 
$$
and
$$
\langle \vec{b}, \vec{a} \rangle \sqsubset_{\pair{a}} \pair{b} \text { for all } \vec{b}^{\, \prime} \neq \vec{a}
$$
(so that $\langle \vec{a},\vec{a} \rangle$ is the minimal element of
$\sqsubset_{\pair{a}})$.   
This completes the description of $M$.

Let $\textsc{min}_{\pair{a}}(\d[\vec{Y} \gets \vec{y}]) \in \at \times \at$ denote the
$\sqsubset_{\pair{a}}$-minimal state such that 
$(M,\textsc{min}_{\pair{a}}(\d[\vec{Y}\gets \vec{y}])) \models \vec{Y} \gets \vec{Y}$.
Thus, $(M,\pair{a}) \models \d[\svc{Y}]\phi$ iff 
%evan1: rewrote a few line for typos
% $M, \textsc{min}_{\pair{a}}(\d[\svc{Y}]) \models \phi$.   It is easy
% to check that $\textsc{min}_{\pair{a}}(\d[\vec{Y} \gets \vec{y}]) -
% \langle \vec{a}_{\vec{Y} \gets \vec{y}},\vec{a}]$.  Thus, the first
 $(M, \textsc{min}_{\pair{a}}(\d[\svc{Y}])) \models \phi$.   It is easy
to check that 
%evan3: for refererence, made display
% $\textsc{min}_{\pair{a}}(\d[\vec{Y} \gets \vec{y}]) =
% \langle \vec{a}_{\vec{Y} \gets \vec{y}},\vec{a}\rangle $. 
\begin{equation}
\label{eq:mintoclosest}
\textsc{min}_{\pair{a}}(\d[\vec{Y} \gets \vec{y}]) =
\langle \vec{a}_{\vec{Y} \gets \vec{y}},\vec{a}\rangle.
\end{equation}
Thus, the first
component of the minimal state that results from performing the intervention $\vec{Y}
\gets \vec{y}$ in a state of the form $\langle
\vec{a},\vec{a}'\rangle$ encodes the atom that results when the
intervention is performed starting at $\vec{a}$, while the second
component keeps track of the atom we started with.

%evan3*: i went ahead and cut
%evan2* I think we should state that is difference is non-critical
% The set of actions considered by BH is
% richer than the set we consider here.  Specifically, although we
% consider only acts of the form $\d[\vec{Y} \gets \vec{y}]$ (which BH
% write as $\d[\vec{Y} = \vec{y}]$, since they take the arguments of the
% $\d$ operator to be formulas), 
% they allow acts of the form
% $\d[\phi]$ for an arbitrary formula $\phi \in \L(\S)$.
BH define an analogue of our function $h_A$, which we denote $h_A'$,
that associates with each of their actions an action of the form
%joe4
%$\d[\svc{Y}]$.\footnote{BH allow for, but \emph{do not require}, a
%richer set of interventions, considering $\d[\phi]$ for any
%non-contradictory $\phi$ in the language. }  For a state $\omega$, the
$\d[\svc{Y}]$.\footnote{BH allow for (but do not require) a
richer set of interventions, they allow $\d[\phi]$ for any
consistent formula $\phi$ in the language. }  For a state $\omega$, the
analogue of $\beta_A^M(\omega)$ 
is $\textsc{min}_{\omega}(h_A'(\omega))$.%
\footnote{The states for us are contexts, 
%evan1:
% whereas for BH they pairs of atoms, so using the same symbol for
%joe2
%whereas for BH they are pairs of atoms, so using the same symbol for
whereas for BH they are pairs of atoms, so using the same symbol
$\omega$ for states is somewhat of an abuse of notation; we hope that
our intention is clear here.}
BH show that there exists a probability measure $\p$ and utility
function $\mu$, both defined on the states of $M$, such that
  $A \s_{\S} B$ iff 
\begin{equation}
  \label{eq:rep}
  \begin{array}{lll}
  %evanIJCAI: added \limits, substack ffor spacing 
&\sum\limits_{\substack{\pair{a} \in \\ \at \times \at}} \p(\pair{a}) \cdot 
\mu(\textsc{min}_{\pair{a}}(h'_A(\pair{a}))) \\ \ge
&\sum\limits_{\substack{\pair{a} \in \\ \at \times \at}} \p(\pair{a}) \cdot 
\mu(\textsc{min}_{\pair{a}}(h'_B(\pair{a}))).
\end{array}
  \end{equation}
This is the sense in which the Lewis-style model $M$ represents $\s_\S$.

Since BH's Lemma 5 is a direct translation of our Lemma \ref{lem:sameIntervention}, it follows
from our axioms as well that this Lewis-style model $M$ represents $\s_\S$.
%joe1*: Evan, this needs to be checked!  Perhaps we should say more
%here.  
%joe2*: unecessary, as you pointed out
%\footnote{There is a subtlety here: we must show that BH's proof still
%works despite the fact that we have a restricted set of acts. 
%Fortunately, it is straightforward to check that this is the case.  We
%leave the details to the reader.}
But we are not done; we need to get a causal model that represents $\s_\S$.  
%joe2*: new material starts here
So we want to convert the Lewis-style model constructed by BH to a
causal model $M^C = (\S^C,\F^C)$, 
and construct an appropriate probability $\p^C$ on the contexts of
$M$, and utility $\mu^C$ on $\A(\S)$.
%[[I STOPPED HERE.]]

\commentout{
Before getting to this, for any pair of atoms, consider the semantics
for $\L(\S)$ that simply ignores the second component, that is:  
\begin{equation}
\label{eq:lewisSem}
%joe1
%\pair{a} \models (X=x) \text{ iff }\vecc{a}|_{X} = x
\pair{a} \models (X=x) \text{ iff }\vecc{a}|_{X} = (X =x)
\end{equation}
%joe1
%and $\models$ extends to compound formulae in the standard way.
and $\models$ extends to $\L(\S)$ in the standard way.

We can turn this into a counterfactual ``Lewis-Style" model over the
expanded language $\L^+$ that contains $\L(\S)$ and all formula of the
form $\d[\svc{Y}]\phi$, where $\phi \in \L(\S)$. The interpretation
here is that $\phi$ is true in the counterfactual world where
$\vec{Y}$ is intervened upon and set to the values $\vec{y}$. To give
semantics to these new formulae, we define a parameterized family of
closeness operators $\sqsubset_{\vec{a}}$.  Let $\sqsubset_{\vec{a}}$
be any linear ordering over  $\at \times \at$ that satisfies
Let $\sqsubset_{\vec{a}}$
be a linear order on $\at \times \at$ that satisfies
$$
\pair{b} \sqsubset_{\vec{a}} \pair{c} \text{ whenever } \vec{b} <_{\vec{a}} \vec{c}
$$
and
$$
\langle \vec{b}, \vec{a} \rangle \sqsubset_{\vec{a}} \pair{b} \text { for all } \vec{b}^{\, \prime} \neq \vec{a}.
$$
and define $\textsc{min}_{\vec{a}}(\d[\svc{Y}]) \in \at \times \at$ as the $\sqsubset_{\vec{a}}$-minimal element, such that $\textsc{min}_{\vec{a}}(\d[\svc{Y}]) \models [\vec{Y} = \vec{y}]$.
Using this, we can extend our semantics to modal formulae:
\begin{equation}
\label{eq:modalext}
\pair{a} \models \d[\svc{Y}]\phi \text{ iff } \textsc{min}_{\vec{a}}(\d[\svc{Y}]) \models \phi
\end{equation}

Appealing to Theorem 1 of \cite{bjorndahl2023sequential} of BH, we obtain the pair $(p,u)$, a fully supported probability distribution and utility function, respectively, over pairs of atoms, such that
\begin{equation}
\label{eq:BHrep}
U: A \mapsto  \quad \sum_{\vec{a} \in \at} \sum_{\vec{a}^{\, \prime} \in \at} p(\pair{a}) \cdot u(\textsc{min}_{\vec{a}}(h_A(\vec{a})))
\end{equation}
is utility representation over $\s$. 
Moreover, BH show that
\begin{equation}
\label{eq:reducestate}
\textsc{min}_{\vec{a}}(\d[\svc{Y}]) = \langle \vec{a}_{\d[\svc{Y}]}, \vec{a} \rangle
\end{equation}
where $\vec{a}_{\d[\svc{Y}]}$ is the $<_{\vec{a}}$-minimal atom such that $[\vec{Y} = \vec{y}]$ holds (as introduced in Lemma \ref{lem:minatom}).
}
%joe1: end{commentout}

%joe2*
%Appealing to Axiom A\ref{ax:ctx} and \eqref{eq:lewisSem}, we see that
The first step is easy.  We start with a signature $\S$ that
determines the language; we take $\S^C = \S$.  
By Axiom A\ref{ax:ctx}, it follows that
for each context $\u \in \C(\S)$, there at most one atom such that
%joe2*: Evan, I didn't understand this;  What does "non-constant" mean
%in this context?  I rewrote it
%$u$ is non-constant over the set $\{\textsc{min}_{\vec{a}}(h_A(\vec{a}))
%\mid \vec{a}|_\U = \u\}$.
such that $\vec{a} \Rightarrow (\U = \u)$ and $\vec{a}$ is non-null. 
Let $\C^\dag$ denote the set of contexts for
which such a non-null atom exists, and for each context $\u \in
\C^\dag$, let $\vec{a}_{\u}$ denote this non-null atom.
%joe2
%Define the probability, $\p$ over $\C^\dag$:
We define $\p^C$ using the probability $\p$ in the model $M$ provided
by BH so that it has support $\C^\dag$: 
$$
\p^C(\u) = \frac{\sum_{\vec{a}^{\, \prime} \in \at} \p(\langle \vec{a}_{\u}, \vec{a}^{\, \prime} \rangle)}{\sum_{\vec{u}^{\, \prime} \in \C^\dag}\sum_{\vec{a}^{\, \prime} \in \at} \p(\langle \vec{a}_{\vec{u}^{\, \prime}}, \vec{a}^{\, \prime} \rangle)}.
$$
%joe2*: I didn't understand this, so I just cut it.
%By factoring out the constant parts of each utility, we have that the
%following is also a representation of $\s$: 
%
%\begin{equation}
%\label{eq:BHrep2}
%U^\dag: A \mapsto  \quad \sum_{\u \in \C^\dag} \p(\u) \cdot
%u(\textsc{min}_{\vec{a}_{\u}}(h_A(\vec{a}_{\u}))). 
%\end{equation}
%evan3*: rewrite
%joe2*
% The utility function given by BH is defined on their states, which are
% pairs of atoms, but depends only on the second atom in the pair; that
% is $u(\langle \vec{a}', \vec{a}\rangle) = u(\langle \vec{a}'',
% \vec{a}\rangle)$ for all atoms $\vec{a}'$ and $\vec{a}''$.  We define
% $\mu^C(\vec{a}) = u(\langle \vec{a},\vec{a} \rangle)$.

%evan3*: moved this to later, since it is not used, not tied to this change
% The utility function, $\mu$, given by BH is defined on their states, which are
% pairs of atoms. We define $\mu^C$ over individual atoms as
% \begin{equation}
% \label{eq:udef}
% \mu^C(\vec{a}) = u(\langle \vec{a},\vec{a}_{\vec{a}|_\U} \rangle)
% \end{equation}
% if $\vec{a}|_\U \in \C^\dag$ (and set it arbitrary otherwise). 
% Unpacking this, $\vec{a}|_\U$ is the context of the atom $\vec{a}$, and so, 
% $\vec{a}_{\vec{a}|_\U}$ is the unique non-null atom with the same context as $\vec{a}$.

%evan3:added
Observe that, since the truth of
formulas in $M$ at state $\pair{a}$ is fully determined by $\vec{a}$, $h'_A(\pair{a})$ does not depend on $\vecc{a}'$. In particular, $h'_A(\pair{a}) = h_A(\vec{a})$. Similarly, by \eqref{eq:mintoclosest},
$\textsc{min}_{\pair{a}}(\cdot)$ does not depend on $\vecc{a}'$. We can therefore rewrite the
%joe4
%BH representation, \eqref{eq:rep}, as
BH representation \eqref{eq:rep} as
\begin{equation}
  \label{eq:nurep}
  \begin{array}{lll}
&\sum_{\u \in \C^\dag} \p^C(\u) \cdot 
\mu(\textsc{min}_{\langle \vec{a}_{\u}, \vec{a}_{\u} \rangle}(h_A( \vec{a}_{\u} ))) \\ \ge
&\sum_{\u \in \C^\dag} \p^C(\u) \cdot 
\mu(\textsc{min}_{\langle \vec{a}_{\u}, \vec{a}_{\u} \rangle}(h_B(\vec{a}_{\u} ))).
\end{array}
  \end{equation}

%evan3*: wrapped below in lemma, to make referenceable
We next want to show that there exists a set of equations $\F^C$ such that $M^C = (\S^C,\F^C)$  satisfies the same formulas as $M$. Specifically:

\begin{lemma}
\label{lem:biequiv}
There exists a set of equations $\F^C$ 
%evan3:
%in $M^C$ can be defined
%in such a way that for all $\phi \in \L(\S)$ interventions $\vec{Y}
such that for all $\phi \in \L(\S)$ and all interventions $\vec{Y}
\gets \vec{y}$, we have that
%The final step of the proof is to show that there exists a causal
%model $M = (\S,\F)$ such that for all $\u \in \C^\dag$,  
%joe2
%evanIJCA
%\begin{equation}
\begin{align}
%(M,\u) \models \phi \quad \iff \quad\langle \vec{a}_{\u}, \vec{a}^{\,
%\begin{array}{lll}
\label{eq:equivmodel1}
%  (M^c,\u) \models \phi \quad \iff \quad\langle (M,\vec{a}_{\u}, \vec{a}^{\,
%\prime}) \rangle \models \phi \mbox{ for all atoms $a'$, and } \\ 
%\label{eq:equivmodel2}
\nonumber
&(M^C_{\d[\svc{Y}]},\u) \models \phi  \quad \mbox{ iff } \\
&(M,\textsc{min}_{\langle \vec{a}_{\u}, \vec{a}_{\u}
  \rangle}(\d[\svc{Y}])) \models \phi.  
%\end{array}
%evanIJCAI
%\end{equation}
\end{align}
%\label{lem:rec}
Note that taking $\vec{Y} = \emptyset$ gives a special case of
(\ref{eq:equivmodel1}): 
$(M^C,\u) \models \phi$ iff 
$(M,\langle  \vec{a}_{\u}, \vec{a}_{\u} \rangle) \models \phi$.
\end{lemma}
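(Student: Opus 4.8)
The plan is to \emph{define} the structural equations $\F^C$ explicitly, so that solving $M^C$ in a context reproduces exactly the atom picked out by the closeness orders $<_{\vec{a}}$, and then to verify \eqref{eq:equivmodel1} by induction along $\dot{\leadsto}$. Concretely, for each endogenous $X$ and each assignment $\vec{w}$ to the remaining variables, write $\u = \vec{w}|_{\U}$ and $\vec{Z}_X$ for the endogenous variables strictly $\dot{\leadsto}$-below $X$; when $\u \in \C^\dag$, define $F^C_X(\vec{w})$ to be the $\ll^X_{\vec{a}_{\u}}$-minimal $x \in \Rn(X)$ with $\d[\vec{Z}_X \gets \vec{w}|_{\vec{Z}_X}] \fixes_{\vec{a}_{\u}} (X=x)$ (and arbitrarily when $\u \notin \C^\dag$). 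Axiom~A\ref{ax:def} makes this set non-empty, so $F^C_X$ is well defined; and since $F^C_X(\vec{w})$ reads off only $\vec{w}|_{\U}$ (hence $\vec{a}_{\u}$, via Axiom~A\ref{ax:ctx}) and $\vec{w}|_{\vec{Z}_X}$, all $\dot{\leadsto}$-below $X$, the model $M^C$ is recursive with $\dot{\leadsto}$ as witnessing order.

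The heart of the proof is the claim that, for every $\u \in \C^\dag$ and every intervention $\d[\svc{Y}]$, the unique solution of $M^C_{\d[\svc{Y}]}$ in context $\u$ is the atom $(\vec{a}_{\u})_{\d[\svc{Y}]}$. I would establish this by induction on $\dot{\leadsto}$. The exogenous variables form the initial segment and equal $\u$ on both sides, matching $(\vec{a}_{\u})_{\d[\svc{Y}]}|_{\U} = \vec{a}_{\u}|_{\U} = \u$ by \eqref{m.minY}. For endogenous $X$: if $X \in \vec{Y}$, both assign $\vec{y}|_X$; if $X \notin \vec{Y}$, the inductive hypothesis gives that the solution agrees with $(\vec{a}_{\u})_{\d[\svc{Y}]}$ on $\vec{Z}_X$, so $F^C_X$ returns the $\ll^X_{\vec{a}_{\u}}$-minimal element of $m(\vec{a}_{\u},X)$, which by \eqref{m.fixes} is exactly $(\vec{a}_{\u})_{\d[\svc{Y}]}|_X$. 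Given the claim, the result follows: truth of $\phi \in \L(\S)$ at a state of $M$ depends only on its first component, and \eqref{eq:mintoclosest} identifies $\textsc{min}_{\langle \vec{a}_{\u}, \vec{a}_{\u}\rangle}(\d[\svc{Y}])$ with $\langle (\vec{a}_{\u})_{\d[\svc{Y}]}, \vec{a}_{\u}\rangle$, so both sides of \eqref{eq:equivmodel1} reduce to evaluating $\phi$ at the single atom $(\vec{a}_{\u})_{\d[\svc{Y}]}$. The displayed special case ($\vec{Y}=\emptyset$) falls out immediately; alternatively it can be checked directly, since Centeredness (Axiom~A\ref{ax:cent}) places $\vec{a}_{\u}|_X$ in $m(\vec{a}_{\u},X)$ and it is $\ll^X_{\vec{a}_{\u}}$-minimal in all of $\Rn(X)$, forcing $F^C_X$ to return $\vec{a}_{\u}|_X$ along the undisturbed recursion.

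The step I expect to be the main obstacle is guaranteeing the \emph{consistency} of the recursion: the definition of $F^C_X$ is indexed by the base atom $\vec{a}_{\u}$ through $\fixes_{\vec{a}_{\u}}$ and $\ll^X_{\vec{a}_{\u}}$, and the target atom $(\vec{a}_{\u})_{\d[\svc{Y}]}$ is the one selected by the closeness order $<_{\vec{a}_{\u}}$ of the \emph{same} base atom. The reason these line up is precisely that the equations are parameterized through the exogenous values $\vec{w}|_{\U}$, which an intervention on endogenous variables never disturbs: throughout the solution of $M^C_{\d[\svc{Y}]}$ in context $\u$, every equation consults the one fixed atom $\vec{a}_{\u}$, matching the hypotheses of Lemma~\ref{lem:minatom}. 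The remaining point to confirm at each inductive step is that $m(\vec{a}_{\u},X)$ is non-empty so the $\ll^X_{\vec{a}_{\u}}$-minimum exists, which is exactly what Definiteness supplies.
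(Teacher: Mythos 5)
Your proposal is correct, and your equations $\F^C$ are in fact the same equations the paper constructs, just described through a different interface: the paper defines $F_{Y_{j+1}}(\vec{u},y_1,\ldots,y_j)$ as the value of $Y_{j+1}$ at $\textsc{min}_{\langle \vec{a}_{\u},\vec{a}_{\u}\rangle}(\d[Y_1 \gets y_1,\ldots,Y_j\gets y_j])$ in the Lewis-style model $M$, which by \eqref{eq:mintoclosest} and \eqref{m.fixes} is exactly your $\ll^{X}_{\vec{a}_{\u}}$-minimal element of the $\fixes_{\vec{a}_{\u}}$-set. Where you genuinely diverge is in the verification. The paper proves \eqref{eq:equivmodel1} formula-by-formula, by induction on the variables $Y_j$ in the $\dot{\leadsto}$ order: it first handles full interventions on initial segments $\{Y_1,\ldots,Y_j\}$ (essentially by definition of $F_{Y_{j+1}}$), and then reduces an arbitrary intervention $\vec{Y}\gets\vec{y}$ to such a full intervention, invoking Lemma~\ref{lem:extendVars} in the base case; Lemma~\ref{lem:minatom} is never used there. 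You instead prove a single structural claim---that for $\u \in \C^\dag$ the unique solution of $M^C_{\d[\svc{Y}]}$ in context $\u$ is the atom $(\vec{a}_{\u})_{\d[\svc{Y}]}$---by induction along $\dot{\leadsto}$, letting \eqref{m.minY} and \eqref{m.fixes} of Lemma~\ref{lem:minatom} carry the inductive step, after which both sides of \eqref{eq:equivmodel1} reduce to truth of $\phi$ at that one atom. Your route buys a cleaner argument: it reuses the already-proved Lemma~\ref{lem:minatom} rather than re-deriving intervention-invariance facts, it makes explicit the ``solution equals closest atom'' correspondence that the paper needs anyway (it is precisely what underlies \eqref{eq:betaviah}), and it handles all interventions uniformly rather than via reduction to canonical ones. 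You also patch two small points the paper glosses over: well-definedness of the equations on contexts outside $\C^\dag$ (you set them arbitrarily) and the explicit check that $M^C$ is recursive with $\dot{\leadsto}$ as the witnessing order, which is what licenses talk of ``the unique solution.'' The paper's formula-level induction, for its part, stays entirely inside the satisfaction relation and never has to mention solutions of the equations as objects, which is marginally more self-contained if one has not yet internalized Lemma~\ref{lem:minatom}.
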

  
\fullv{
%evan3
\begin{proof}
Let $Y_1, \ldots, Y_k$ be the ordering on endogenous variables
according to $\dot{\leadsto}$. 
We define $\F^C$ for the endogenous variables by induction $j$ so that
(\ref{eq:equivmodel1}) for all formulas $Y_j = y_j$.

For the base case, note that the value of the variable $Y_1$ does not
depend on the 
values of any other endogenous variable.  If is straightforward to
define $F_{Y_1}$ so that $F_{Y_1}(\vec{u}) = y_1$ iff
$a_{\vec{u}} \Rightarrow Y_1=y_1$.  It easily follows that 
$(M^C,\u) \models Y_1 = y_1$ iff 
$(M,\langle  \vec{a}_{\u}, \vec{a}_{\u} \rangle) \models Y_1 = y_1$.
Moreover, if $Y_1 \notin \vec{Y}$, then 
$(M^C_{\d[\svc{Y}]},\u) \models Y_1 = y_1$ iff
$(M^C,\u) \models Y_1 = y_1$.  
By Lemma~\ref{lem:extendVars} (taking $\vec{Z} = \emptyset$ and $X = Y_1$)
%(M,\textsc{min}_{\langle \vec{a}_{\u}, \vec{a}_{\u}
%  \rangle}(\d[\svc{Y}])) \models \phi.  
it follows that $(M,\textsc{min}_{\langle \vec{a}_{\u}, \vec{a}_{\u}
  \rangle}(\d[\svc{Y}])) \models Y_1 = y_1$ iff $(M,\langle \vec{a}_{\u},
\vec{a}_{\u}) \models Y_1 = y_1$.  On the other hand, if $Y_1 \in
\vec{Y}$, then $(M^C_{\d[\svc{Y}]},\u) \models Y_1 = y_1$ iff
$\vec{y}\mid_{Y_1} = y_1$ and similarly, $(M,\textsc{min}_{\langle \vec{a}_{\u}, \vec{a}_{\u}
  \rangle}(\d[\svc{Y}])) \models Y_1 = y_1$ iff  $\vec{y}\mid_{Y_1} = y_1$.
Thus, we get the desired result in the base case.

For the inductive case, suppose that we have proved the result for
$Y_1, \ldots, Y_j$.  We can define $F_{Y_{j+1}}$ to be function of the
  exogenous variables and $Y_1, \ldots, Y_j$ such that for all $k \in
  \{1,\ldots, j\}$, and $y_i \in \Rn(Y_i)$ for $i = 1,\ldots, j$, we
  have that $F_{Y_{j+1}}(\vec{u}, y_1, \ldots, y_j) = y_{k+1}$ iff
$(M,\textsc{min}_{\langle \vec{a}_{\u}, \vec{a}_{\u}
  \rangle}(\d[Y_1 \gets y_1, \ldots, Y_k \gets y_k]) \models Y_{k+1}
= y_{k+1}$.  Now for arbitrary $\vec{Y}$, if $Y_{j+1} \notin \vec{Y}$,
let $\vec{Y}' = \vec{Y} \cap
\{Y_1, \ldots, Y_j\}$.  Let $\vec{y}' = \vec{y}|_{\vec{Y}'}$.  Writing
$\vec{Y}''$ for $\{Y_1, \ldots, Y_j\}$, let $\vec{y}''$ be such that
$(M^C_{\d[\vec{Y}' \gets \vec{y}']},\u) \models \vec{Y}'' =
\vec{y}''$.  Then it follows from the induction hypothesis that 
 $(M,\textsc{min}_{\langle \vec{a}_{\u}, \vec{a}_{\u}
  \rangle}(\d[\vec{Y}' \gets \vec{y}'])) \models \vec{Y}'' =
\vec{y}''$.
Moreover, it easily follows that $(M^C_{\d[\svc{Y}]},\u) \models
Y_{j+1} = y_{j+1}$ iff $(M^C_{\d[\vec{Y}'' \gets\vec{y}'']},\u) \models
Y_{j+1} = y_{j+1}$ and 
 $(M,\textsc{min}_{\langle \vec{a}_{\u}, \vec{a}_{\u}
  \rangle}(\d[\vec{Y} \gets \vec{y}])) \models Y_{j+1} =
y_{j+1}$ iff 
 $(M,\textsc{min}_{\langle \vec{a}_{\u}, \vec{a}_{\u}
  \rangle}(\d[\vec{Y}'' \gets \vec{y}''])) \models Y_{j+1} =
y_{j+1}$.  Since, by the definition of $F_{Y_{j+1}}$,
$(M^C_{\d[\vec{Y}'' \gets\vec{y}'']},\u) \models Y_{j+1} = y_{j+1}$ 
iff
 $(M,\textsc{min}_{\langle \vec{a}_{\u}, \vec{a}_{\u}
  \rangle}(\d[\vec{Y}'' \gets \vec{y}''])) \models Y_{j+1} =
y_{j+1}$, the result easily follows.  The argument if $Y_{j+1} \in
\vec{Y}$ is the same as in the base 
case.  
%evan3: not need, since in proof env
%This completes the inductive proof.

The result for arbitrary
formulas $\phi \in \L(\S)$ is immediate (since conjunction,
disjunction, and negation work the same way in both $M$ and $M^C$).
%evan3
\end{proof}
}

%evan3: added
%joe4
%Notice, \eqref{eq:mintoclosest} implies $(M,\textsc{min}_{\langle
From \eqref{eq:mintoclosest}, it follows that $\textsc{min}_{\langle
  \vec{a}_{\u}, \vec{a}_{\u} 
  \rangle}(\d[\svc{Y}]) = \langle
%joe4
%(\vec{a}_{\u})_{\d[\svc{Y}]},\vec{a}\rangle) \models
(\vec{a}_{\u})_{\d[\svc{Y}]},\vec{a}\rangle$; moreover,
$(M, \langle ( \vec{a}_{\u})_{\d[\svc{Y}]},\vec{a}\rangle) \models
%joe4
%(\vec{a}_{\u})_{\d[\svc{y}]}$. Thus, by Lemma \ref{lem:biequiv}
(\vec{a}_{\u})_{\d[\svc{y}]}$. Thus, by Lemma \ref{lem:biequiv},  
$(M^C_{\d[\svc{y}]},\u)  \models
(\vec{a}_{\u})_{\d[\svc{y}]}$. Moreover, since $(M^c, \u) \models
%joe4
%\vec{a}_{\u}$, i.e., $\vec{a}_{M^C, \u} = \vec{a}_{\u}$ we have
\vec{a}_{\u}$, that is, $\vec{a}_{M^C, \u} = \vec{a}_{\u}$, we have 
  \begin{equation}
  \label{eq:betaviah}
  \beta^{M^C}_A(\u) = (\vec{a}_{\u})_{h_A(\vec{a}_{\u})}.
  \end{equation}

%joe2*: cut all this material; no longer needed
\commentout{
For each pair $\pair{a}$ in the support of $p$, we have that the order
$\dot{\leadsto}_{\vec{a}}$ is \emph{recursive} (with respect to the
order $\sqsubset_{\vec{a}}$over atoms). Specifically, that is,  
if $X \dot{\leadsto}_{\vec{a}} W$, then for all $\vec{Y} \subseteq (\U \cup \V)\setminus\{X,Z\}$ and values for $\vec{y}$ for the variables in $\vec{Y}$ and $w \in \Rn(W)$, 
$$
\pair{a} \models \d[\svc{Y}](X = x)\quad \text{ iff }\quad \pair{a} \models \d[\svc{Y},W \leftarrow w](X = x).
$$
%\end{lemma}

  \begin{proof}
We have:
\begin{align*}
&\pair{a} \models \d[\svc{Y}](X = x) \\
\text{ iff } \quad &\langle \vec{a}_{\d[\svc{Y}]}, \vec{a} \rangle \models (X=x)
&& \text{(from \eqref{eq:modalext} and \eqref{eq:reducestate})} \\
\text{ iff } \quad & \vec{a}_{\d[\svc{Y}]}|_{X} = x
&& \text{(from \eqref{eq:lewisSem})}
\end{align*}

Now from \eqref{m.fixes}, this last line is bi-equivalent to the stipulation that $x$ is the $\ll^{X}_{\vec{a}}$-minimal element of $\{ x' \in \Rn(X) \mid \d[\svc{Z}\,] \fixes_{\vec{a}} (X= x')\}$, where $\vec{Z}$ denotes the (possibly empty) initial segment of $\dot{\leadsto}_{\vec{a}}$ (strictly) less than $X$ and $\vec{z}$ denote the values of this atom over $\vec{Z}$: $\vec{z} = \vec{a}_{\d[\svc{Y}]}|_{\vec{Z}}$. By our assumption $X \leadsto_{\vec{a}} W$, and so by Lemma \ref{lem:extendVars}, we have 
$$\{ x' \in \Rn(X) \mid \d[\svc{Z}\,] \fixes_{\vec{a}} (X= x')\} = \{ x' \in \Rn(X) \mid \d[\svc{Z}, W \leftarrow w] \fixes_{\vec{a}} (X= x')\}$$
and thus, by \eqref{m.fixes} that $\vec{a}_{\d[\svc{Y}]}, W \leftarrow w)|_{X} = x$. Running the chain of implications in reverse, we have that this is equivalent to 
$$
\pair{a} \models \d[\svc{Y},W \leftarrow w](X = x)
$$
as needed.
\end{proof}

This suffices to show our representation exists. Indeed, we can define
the needed utility function as  
\begin{equation}
\label{eq:utilitydef}
\mu(\vec{a}) = u(\langle \vec{a}, \vec{a}_{\vec{a}|_\U} \rangle).
\end{equation}
}
%joe2: \end{commentout}

%joe1
%So we have, for each action $A \in \A$ and context $\u \in \C^\dag$:
%joe2*: I have not checked this last part, and there are lots of
%dangling references.  I wanted to make sure that we converged on
%everything else first.
We can now complete the proof that $M^C$ represents the preference order. 
%joe4
%The utility function, $\mu$, given by BH is defined on their states, which are
The utility function $\mu$ given by BH is defined on their states, which are
pairs of atoms. We define $\mu^C$ over individual atoms as
\begin{equation}
\label{eq:udef}
\mu^C(\vec{a}) = u(\langle \vec{a},\vec{a}_{\vec{a}|_\U} \rangle)
\end{equation}
%joe4
%if $\vec{a}|_\U \in \C^\dag$ (and set it arbitrary otherwise).
if $\vec{a}|_\U \in \C^\dag$ (and define it arbitrarily otherwise). 
Unpacking this, $\vec{a}|_\U$ is the context of the atom $\vec{a}$, and so, 
$\vec{a}_{\vec{a}|_\U}$ is the unique non-null atom with the same context as $\vec{a}$.

\begin{align*}
\mu(\textsc{min}_{\langle \vec{a}_{\u}, \vec{a}_{\u} \rangle}(h_A( \vec{a}_{\u})))
=& \ \mu(\langle (\vec{a}_{\u})_{h_A(\vec{a}_{\u})}, \vec{a}_{\u} \rangle)
&& \text{(from \eqref{eq:mintoclosest})} \\
=& \  \mu^C( (\vec{a}_{\u})_{h_A(\vec{a}_{\u})} )
&& \text{(from \eqref{eq:udef})} \\
=& \ \mu^C( \beta_A^M(\u))
&& \text{(from \eqref{eq:betaviah})}
\end{align*}
%joe4
%Substituting of $\mu^C$ for $\mu$ in \eqref{eq:nurep} thus delivers the
Substituting $\mu^C$ for $\mu$ in \eqref{eq:nurep} thus delivers the
desired representation.  

\fullv{
%evan4*
%joe5
%Towards Theorem \ref{thm:uniq}, the uniqueness claim: Let $(M,
To prove Theorem \ref{thm:uniq}, the uniqueness claim, let $(M,
\p,\mu)$ and $(M', \p',\mu')$ be two representations of $\s_\S$.  
%joe5
%Notice for all $\u \in \C^\dag$ and all vectors $\vec{Y}$, $\vec{y}
For all $\u \in \C^\dag$ and all vectors $\vec{Y}$, $\vec{y}
\in \Rn(\vec{Y})$, 
endogenous variables $X \notin \vec{Y}$, and $x \in \Rn(X)$
\begin{equation}
\label{eq:uniq1}
%evanIJCAI, elimited quad spacing
(M_{\d[\svc{Y}]},\u) \models (X = x) 
%\quad 
\implies 
%\quad 
\d[\svc{Y}]
%joe5
%\fixes_{\vec{a}_{\u}} (X = x)
\fixes_{\vec{a}_{\u}} (X = x); 
\end{equation}
%joe5
%and likewise,
similarly,
\begin{equation}
\label{eq:uniq2}
(M'_{\d[\svc{Y}]},\u) \models (X = x) 
%\quad 
\implies 
%\quad 
\d[\svc{Y}]
%joe5
%\fixes_{\vec{a}_{\u}} (X = x)
\fixes_{\vec{a}_{\u}} (X = x). 
\end{equation}

%joe5
%Now, let $\s_\S$ satisfy A\ref{ax:defp}$^*$. Since by
%A\ref{ax:defp}$^*$ there is an unique $x \in \Rn(X)$ that satisfies
%right hand relations of \eqref{eq:uniq1} and \eqref{eq:uniq2}, we have
Suppose that $\s_\S$ satisfies A\ref{ax:defp}$^*$. Then
there is a unique $x \in \Rn(X)$ that satisfies
the  
right-hand relations of \eqref{eq:uniq1} and \eqref{eq:uniq2}, so we have
$$
(M_{\d[\svc{Y}]},\u) \models (X = x)  
%\quad 
\iff 
%\quad
%joe5
%(M'_{\d[\svc{Y}]},\u) \models (X = x) $$
(M'_{\d[\svc{Y}]},\u) \models (X = x). $$ 
%joe5
%or that $M$ and $M'$ coincide over $\C^\dag$.
It easily follows that $M$ and $M'$ agree on all formulas in
$\L^+(\S)$, so $M$ is identifiable.

%joe5
%Now, assume that $M \neq M'$ for some $\u \in \C^\dag$. In particular
For the converse, suppose that $M \neq M'$ for some $\u \in
\C^\dag$. In particular, 
this requires that there exists some vector $\vec{Y}$, $\vec{y} \in
\Rn(\vec{Y})$ and endogenous variable $X \notin \vec{Y}$, such that
$(M_{\d[\svc{Y}]},\u) \models (X = x)$ and $(M'_{\d[\svc{Y}]},\u)
\models (X = x')$ and $x \neq x'$. But then, by \eqref{eq:uniq1} and
\eqref{eq:uniq2}, $\s_\S$ violates A\ref{ax:defp}$^*$. 
}

%joe7
%\section{Conclusion}
\section{Discussion}

%joe7: added next two paragraphs
We have given a representation theorem in the spirit
of Savage \nciteyear{Savage} that helps us understand a decision
maker's (subjective) causal judgements: If a decision maker's
preferences among actions that involve interventions satisfy certain
axioms, then we can find a
causal model $M$, a probability over contexts in $M$, and a utility on
%evan8: changed alpha to A, in line with body
%states in $M$ such that the decision maker prefers action $\alpha$ to
%$\beta$ iff $\alpha$ has higher expected utility than $\beta$.
states in $M$ such that the decision maker prefers action $A$ to
$B$ iff $A$ has higher expected utility than $B$.
Moreover, we have shown if we add another axiom, then $M$ is unique.
Our approach builds on earlier work by BH, who proved an analogous
representation theorem using Lewis-style models.  Interestingly, other
than the Cancellation axiom (which is due to BEH), the axioms used by
BH are completely different from ours. For example, they do not define
an analogue of our $\leadsto_{\vec{a}}$ relation, which plays a
critical role in our definiteness and centeredness axioms.  Roughly
speaking, the BH axioms build on axioms for counterfactuals used by
Lewis, while ours build on axioms for causal models introduced by
Galles and Pearl \nciteyear{GallesPearl98} and
Halpern \nciteyear{Hal28}.  In future work, we hope to better
understand the connection between the axioms (e.g., to what extent
could we use the BH axioms as a basis for a representation theorem for
causal models), with the hope of understanding better the connection
between Lewis-style models and causal models.

%joe7
Here we have considered only one-step decisions; that is, the
decision-maker performs an intervention, perhaps conditional on some
test.  It is clearly also of interest to consider sequential
decisions.  In practice, plans are composed of a sequence of
steps; later interventions might depend on earlier  interventions.
This leads us to consider a richer set of actions such as {\bf if $\phi_1$
then $A_1$ else $B_1$; if $\phi_2$ then $A_2$ else $B_2$}, where the
second action ({\bf if $\phi_2$ then $A_2$ else $B_2$}) is performed
after the first.  Getting a representation theorem for Lewis-style
models in the presence of sequential actions seems significantly more
complicated than in the case of one-step actions \cite{BH23}.
%precisely because we need to consider the effect of later
%interventions in the state that arises after earlier interventions.
It would be of interest to see what is involved in getting such a
representation theorem using causal models.

Interestingly, we expect there to be significant differences between
causal models and Lewis-style once we have sequential decisions.
%joe7 back to your writeup, with smaller changes
%A subjective causal expected utility decision maker will behave in a
A decision maker who uses causal
models will behave in a
dynamically consistent way with respect to sequential interventions
%joe7*
involving primitive actions:
if the intervention $\d[\svc{Y},\svc{Z}]$ is preferred to the
intervention $\d[\svc{Y},\svc{Z'}]$, then $\d[\svc{Z}]$ must be
preferred to $\d[\svc{Z'}]$ in the model that results from taking
action $\d[\svc{Y}]$.
%joe7*: I didn't understand this footnote.  If we have a richer
%programming language, then there's an obvious way to do composition,
%and I don't think it's not necessarily atomwise (to the extent that I
%understand what you mean by atomwise.  You can write programs where
%things are done atomwise, but in general, they're not.  I cut the
%footnote.  Let's discuss if necessary.
%\footnote{This also holds for general
%(non-primitive) actions, where composition is done atom-wise.}
In other words, the decision maker's preferences regarding sequences
of primitive actions are invariant to
%joe7
%the timing, or order, of the interventions.
timing, or order, although this is not necessarily the
case for arbitrary interventions.  
%joe7
%The intuition for this is
%straightforward: the counterfactual model $M_{\d[\svc{Y},\svc{Z}]}$
%and the (counter)-counter-factual model
%$(M_{\d[\svc{Y}]})_{\d[\svc{Z}]}$ are identical, they exhibit the
%exact same restrictions on variable assignments.
This is because the interventions $\d[\svc{Y},\svc{Z}]$ and
$\d[\svc{Z},\svc{Y}]$ are identical.

%joe7
%This observation is of particular interest as it stands in stark
%contrast from the situation for dynamic behavior in general
%counterfactual (Lewis-style) models \cite{BH23}.
By way of contrast, the order in which primitive interventions are
performed can have a significant impact in Lewis-style models.
In such models, the
%joe2
%closest world operator is local; loosely speaking, the `distance'
closest-world operator is local; loosely speaking, the ``distance''
between two worlds depends on the world at which they are being
contemplated. Because of this, intervening to make $\phi$
true---moving to the closest $\phi$-world---also changes the relative
%joe7
%closeness of other worlds, and so, the closest $\{\phi, \psi\}$-world
%(where both true) may bare no relation to the closest $\psi$-world to
closeness of other worlds; thus, the closest $\phi \land \psi$-world
may bare no relation to the closest $\psi$-world to
the closest $\phi$-world.
%joe7
This means that we lose some dynamic consistency in Lewis-style models.
%joe7*: waffled a bit more here.  I'm not sure that the recursiveness
%condition is sufficient in general.  Commented out what you wrote and
%replaced it by a few sentences.
\commentout{

We intend to explore, in future work, the issue of dynamic consistency
and order dependence in general Lewis-style models. In particular, we
hope to characterize dynamic consistency via conditions on a general
closeness operator. Clearly, the recursiveness condition
of \cite{Hal40}, by merit of establishing an equivalent structural
equations model, is sufficient. Thus, identifying the gap between this
and a necessary condition would establish a more full understanding of
the relationship between different causal semantics. } 
In future work, we hope to explore the issue of dynamic consistency and order
dependence in both causal models and Lewis-style models, both because
we believe that the issue of importance in its own right, and because
it will help elucidate the differences between these two approaches to
modeling causality.

\appendix

\shortv{
\section*{Ethical Statement}

There are no ethical issues.

\section*{Acknowledgments}
}

%add

%% The file named.bst is a bibliography style file for BibTeX 0.99c
\bibliographystyle{named}
\bibliography{joe,z}
%\bibliography{joe,z}

\shortv{

\clearpage

\section{Appendix}
\subsection{Proofs of Lemmas}

 \begin{lproof}{lem:minatom}
%joe2
%   Since $\U$ forms the initial segment of $\dot{\leadsto}_{\vec{a}}$, it
   Since $\U$ forms the initial segment of $\dot{\leadsto}$, it
   %joe1
%is immediate from \eqref{o.1} that \eqref{m.minY} must hold.
is follows from \eqref{o.1} that \eqref{m.minY} must hold.
%joe1: added
In more detail, if $\vec{a}_{\d[\svc{Y}]}|_{\U} \ne \vec{a}|_{\U}$, then
let $\vec{b}$ be such that $\vec{a}_{\d[\svc{Y}]}|_{\V} = \vec{b}|_{\V}$
and $\vec{b}_{\U} = \vec{a}|_{\U}$.  Since $\U$ forms the initial
%joe2
%segment of $\dot{\leadsto}_{\vec{a}}$, it follows from \eqref{o.1}
segment of $\dot{\leadsto}$, it follows from \eqref{o.1}
that $\vec{b} <_{\vec{a}} \vec{a}_{\d[\svc{Y}]}|_{\U}$, yet
$\vec{b}$ satisfies  $\vec{Y} = \vec{y}$. This is a contradiction.

%joe1
%Now, assume by way of contradiction, that at $X$  \eqref{m.fixes} did
%not hold.  Consider $\vec{b} \in \at$ to be the atom that coincides with
%$\vec{a}_{\d[\svc{Y}]}$ for all variables except $X$, where it is the
%$\ll^{X}_{\vec{a}}$-minimal element of $m(\vec{a}, X)$ (this set is
%non-empty by Axiom \ref{ax:def}).  
 To see that \eqref{m.fixes} holds, suppose by way of contradiction
 that it does not hold.  Then there exists a variable $X$ such that
the value of $X$ in  $\vec{a}_{\d[\svc{Y}]}$ is not the
$\ll^{X}_{\vec{a}}$-minimal element of
$m(\vec{a}, X)$ (note that $m(\vec{a}, X)$ is non-empty by Axiom
 A\ref{ax:def}).   
Let $\vec{b}$ be the atom that coincides with
$\vec{a}_{\d[\svc{Y}]}$ for all variables except $X$, and the value of $X$
in $\vec{b}$ is the
$\ll^{X}_{\vec{a}}$-minimal element of $m(\vec{a}, X)$.
There are two cases: 
 \begin{itemize}
 \item
   $\vec{a}|_{\vec{Z}_X} = \vec{a}_{\d[\svc{Y}]}|_{\vec{Z}_X}$:
%joe1
   %By Axiom \ref{ax:cent}, we have that $\d[\vec{Z}_X \leftarrow
%    \vec{z}_X] \fixes_{\vec{a}} (X= \vec{a}|_{X} )$ and by the
%       properties of $\ll^{X}_{\vec{a}}$, $\vec{a}|_{X}$ is the
%       $\ll^{X}_{\vec{a}}$-minimal element of $m(\vec{a}, X)$. In
%       particular, our assumption that \eqref{m.fixes} does not hold
%       implies that $\vec{a}_{\d[\svc{Y}]}|_{X}  \neq \vec{a}|_{X}$, so
%       $\vec{b} <_{\vec{a}} \vec{a}_{\d[\svc{Y}]}$, by \eqref{o.1}. 
   From Axiom A\ref{ax:cent}, it follows that $\d[\vec{Z}_X \leftarrow
    \vec{z}_X] \fixes_{\vec{a}} \vec{a}|_{X} $. By the definition of
       $\ll^{X}_{\vec{a}}$, the value of $X$ in $\vec{a}$ is the
       $\ll^{X}_{\vec{a}}$-minimal element of $m(\vec{a}, X)$. By
   assumption, the value of $X$ in  $\vec{a}_{\d[\svc{Y}]}|_{X}$ is not the
   $\ll^{X}_{\vec{a}}$-minimal element of $m(\vec{a}, X)$, while the
   value of $X$ in $\vec{b}$ is the  $\ll^{X}_{\vec{a}}$-minimal
   element of $m(\vec{a}, X)$.  It follows that $\vec{b}  <_{\vec{a}}
   \vec{a}_{\d[\svc{Y}]}$, a contradiction.
%joe1*: I assume that all instances of \vec{Z} should \vec{Z}_X.
%Please confirm.
%     \item $\vec{a}|_{\vec{Z}} \neq
%       \vec{a}_{\d[\svc{Y}]}|_{\vec{Z}}$: Since $\vec{a}|_{\vec{Z}} \neq
%       \vec{a}_{\d[\svc{Y}]}|_{\vec{Z}} = \vec{b}|_{\vec{Z}}$, we have
     \item $\vec{a}|_{\vec{Z}_X} \neq
       \vec{a}_{\d[\svc{Y}]}|_{\vec{Z}_X}$: Since $\vec{a}|_{\vec{Z}_X} \neq
       \vec{a}_{\d[\svc{Y}]}|_{\vec{Z}_X} = \vec{b}|_{\vec{Z}_X}$, we have
%joe1
%       $Y(\dot{\leadsto}_{\vec{a}}, \vec{a}, \vec{a}_{\d[\svc{Y}]}) =
%        Y(\dot{\leadsto}_{\vec{a}}, \vec{a}, \vec{b})$, so by
%         \ref{o.2}, $\vec{b} <_{\vec{a}} \vec{a}_{\d[\svc{Y}]}$. 
%
% %\end{itemize}
%       In either case we have found an atom $\vec{b}$ satisfying
%       $\vec{a}|_{\vec{Y}} = \vec{y}$ that is $<_{\vec{a}}$-less than
%       $\vec{a}_{\d[\svc{Y}]}$, a contradiction. 
%       In either case we have found an atom $\vec{b}$ satisfying
%       $\vec{a}|_{\vec{Y}} = \vec{y}$ that is $<_{\vec{a}}$-less than
%       $\vec{a}_{\d[\svc{Y}]}$, a contradiction. 
       $Y_{\vec{a}, \vec{a}, \vec{a}_{\d[\svc{Y}]}} =
       Y_{\vec{a}, \vec{a}, \vec{b}}$, so by
         \ref{o.2}, $\vec{b} <_{\vec{a}} \vec{a}_{\d[\svc{Y}]}$, and again
         we have a contradiction. 
\end{itemize}
  \end{lproof}

\begin{lemma}
  \label{lem:extendVars}
  %  For any atom $\vec{a} \in \at$ and (disjoint) vectors of variables
%$\vec{Z}, \vec{Y}$ and values $\vec{z}$,$\vec{y}$, and variable $X
  For all atoms $\vec{a} \in \at$, (disjoint) vectors of endogenous variables
$\vec{Y}$ and  $\vec{Z}$, $\vec{y} \in \Rn(\vec{Y})$,
  $\vec{z} \in \Rn(\vec{Z})$, endogenous variables $X  
\notin \vec{Z}\cup\vec{Y}$ such that for all $Y\in \vec{Y}$, $X
%joe1
%\dot{\leadsto}_{\vec{a}} Y$, we have, for $x \in \Rn(X)$, 
%joe2
%\dot{\leadsto}_{\vec{a}} Y$, and $x \in \Rn(X)$, we have
\dot{\leadsto} Y$, and $x \in \Rn(X)$, we have
%evanIJCAI
% $$
% \d[\svc{Z}\,] \fixes_{\vec{a}} (X= x) \text{ iff } \d[\svc{Z},
% %joe1
%   %  \svc{Y}] \fixes_{\vec{a}} (X= x)
%     \svc{Y}] \fixes_{\vec{a}} (X= x). 
% $$
\begin{align*}
&\d[\svc{Z}\,] \fixes_{\vec{a}} (X= x) \quad \text{ iff } \\
&\d[\svc{Z},
    \svc{Y}] \fixes_{\vec{a}} (X= x). 
\end{align*}
\end{lemma}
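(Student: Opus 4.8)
The plan is to peel off the variables of $\vec{Y}$ one at a time, reducing the claim to repeated applications of the single-variable relation \eqref{eq:leadstocontra} via induction on $|\vec{Y}|$.

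First I would record the one fact that makes the induction go through: for each $Y \in \vec{Y}$, the hypothesis $X \dot{\leadsto} Y$ forces $X$ to be \emph{unaffected} by $Y$ given $\vec{a}$. Indeed, $\dot{\leadsto}$ is a strict linear order extending $\leadsto = \cup_{\vec{a}} \leadsto_{\vec{a}}$; since $X \dot{\leadsto} Y$ and $\dot{\leadsto}$ is irreflexive, we cannot also have $Y \dot{\leadsto} X$, so $Y \centernot{\leadsto} X$, and in particular $Y \,{\centernot{\leadsto}}_{\vec{a}}\, X$. By the definition of the $\leadsto_{\vec{a}}$ relation, this is exactly the statement that \eqref{eq:leadstocontra} holds for $Y$: for every $\vec{W} \subseteq \V \setminus \{X,Y\}$, every $\vec{w} \in \Rn(\vec{W})$, and every $y \in \Rn(Y)$, we have $\d[\svc{W}] \fixes_{\vec{a}} (X=x)$ iff $\d[\svc{W}, Y \gets y] \fixes_{\vec{a}} (X=x)$.

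With this in hand the induction is routine. The base case $\vec{Y} = \emptyset$ is immediate, since both sides coincide. For the inductive step, I would write $\vec{Y} = (\vec{Y}', Y)$ and split $\vec{y}$ accordingly. Because $X \dot{\leadsto} Y''$ still holds for every $Y'' \in \vec{Y}'$ and $X \notin \vec{Z} \cup \vec{Y}'$, the induction hypothesis gives $\d[\svc{Z}] \fixes_{\vec{a}} (X=x)$ iff $\d[\svc{Z}, \vec{Y}' \gets \vec{y}'] \fixes_{\vec{a}} (X=x)$. I would then instantiate \eqref{eq:leadstocontra} for $Y$ with the free set $\vec{W}$ taken to be $\vec{Z} \cup \vec{Y}'$ (which is disjoint from $\{X,Y\}$), obtaining $\d[\svc{Z}, \vec{Y}' \gets \vec{y}'] \fixes_{\vec{a}} (X=x)$ iff $\d[\svc{Z}, \vec{Y}' \gets \vec{y}', Y \gets y] \fixes_{\vec{a}} (X=x)$. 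Chaining these two equivalences, and noting that $\d[\svc{Z}, \vec{Y}' \gets \vec{y}', Y \gets y]$ is the same primitive action as $\d[\svc{Z}, \svc{Y}]$, closes the induction.

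There is no genuine analytic obstacle; the whole content sits in the first paragraph, namely the passage from the global acyclic order $\dot{\leadsto}$ to the local unaffectedness relations ${\centernot{\leadsto}}_{\vec{a}}$. The only thing to watch in the inductive step is bookkeeping: at each stage one must check that the accumulated intervened-upon set $\vec{Z} \cup \vec{Y}'$ stays disjoint from $\{X,Y\}$ (so that \eqref{eq:leadstocontra} applies) and that the ordering hypothesis is inherited by the subvector $\vec{Y}'$ (so that the induction hypothesis is available). Both are immediate from the disjointness of $\vec{Y}$ and $\vec{Z}$ together with $X \notin \vec{Y} \cup \vec{Z}$.
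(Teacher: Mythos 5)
Your proof is correct and takes essentially the same route as the paper's own (very terse) proof: the paper likewise first observes that the hypothesis $X \dot{\leadsto} Y$ gives $Y \,{\centernot{\leadsto}}_{\vec{a}}\, X$ for each $Y \in \vec{Y}$, and then inducts on the variables of $\vec{Y}$, appealing to \eqref{eq:leadstocontra} at each step. You have merely made explicit the bookkeeping (asymmetry of the linear order $\dot{\leadsto}$ extending $\leadsto$, and disjointness of the accumulated intervention set from $\{X,Y\}$) that the paper leaves implicit.
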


\begin{lproof}{lem:extendVars}
%joe1
  % For each $Y \in \vec{Y}$ we have $Y {\centernot{\leadsto}}_{\vec{a}}\,
%  X$. The Lemma then follows from  a simple inductive argument on the
%%variables of $\vec{Y}$ appealing to \eqref{eq:leadstocontra} in each
  By assumption, for all $Y \in \vec{Y}$, we have $Y
 {\centernot{\leadsto}}_{\vec{a}}\, 
  X$. The lemma then follows from  a simple induction argument on the
variables of $\vec{Y}$, appealing to \eqref{eq:leadstocontra}  in each
step. 
\end{lproof}

\begin{lproof}{lem:sameIntervention}
%joe1  
  %We will show this by induction on the variables, given the order
  We prove this by induction on the variables in $\V$ with respect to  the order
%joe2
  %  $\dot{\leadsto}_{\vec{a}}$. Fix some $X \in \V\setminus \vec{Y}$,
   $\dot{\leadsto}$. Fix some $X \in \V\setminus \vec{Y}$, 
let $\vec{Z} \subseteq \V$ denote the (possibly empty) vector of
%joe2
%endogenous variables strictly $\dot{\leadsto}_{\vec{a}}$-less than $X$, 
endogenous variables strictly $\dot{\leadsto}$-less than $X$, 
%joe1
%and $\vec{z} = \vec{a}_{\d[\svc{Y}]}|_{\vec{Z}}$ the corresponding values.
and let $\vec{z}$ denote the values of the variables $\vec{Z}$ in
$\vec{a}_{\d[\svc{Y}]}$.
%joe1
%To ease notation, set $x = \vec{a}_{\d[\svc{Y}]}|_{X}$, and $\vec{Y}'
%\subseteq \vec{Y}$ as those variables
%$\dot{\leadsto}_{\vec{a}}$-greater than $X$, and $\vec{y}^{\, \prime}
%
%Then in particular, we will show that if 
To simplify notation, let $x$ denote the value of $X$ in
$\vec{a}_{\d[\svc{Y}]}$, let $\vec{Y}' \subseteq \vec{Y}$ consist of those variables
%joe2
%$\dot{\leadsto}_{\vec{a}}$-greater than $X$, and let $\vec{y}^{\,
$\dot{\leadsto}$-greater than $X$, and let $\vec{y}^{\,
  \prime}$ be the restriction of $\vec{y}$ to $\vec{Y}'$.
We now show that if 
\begin{equation}
\label{eq:induct_hypo2}
\textbf{if } \vec{a}  \textbf{ then } \d [\svc{Y}] \\ \sim \\ \textbf{if } \vec{a}  \textbf{ then } \d [\svc{Z}, \vec{Y}' \leftarrow \vec{y}^{\, \prime}] 
\end{equation}
%joe1
%then also
then 
%evanIJCAI:
% \begin{equation}
% \label{eq:induct_conseq}
% \textbf{if } \vecc{a}  \textbf{ then } \d [\svc{Y}] \sim \textbf{if }
% \vecc{a}  \textbf{ then } \d [\svc{Z}, \vec{Y}' \leftarrow \vec{y}^{\,
% %joe1
%     %    \prime}, X \leftarrow x]
%         \prime}, X \leftarrow x].  
% \end{equation}
\begin{gather}
\label{eq:induct_conseq}
\textbf{if } \vecc{a}  \textbf{ then } \d [\svc{Y}] \nonumber \\
\sim \\ 
\nonumber
\textbf{if } 
\vecc{a}  \textbf{ then } \d [\svc{Z}, \vec{Y}' \leftarrow \vec{y}^{\,
        \prime}, X \leftarrow x].  
\end{gather}
%joe1
%From \eqref{m.fixes}, we have $\d[\svc{Z}\,] \fixes_{\vec{a}} (X=
From \eqref{m.fixes}, it follows that $\d[\svc{Z}\,] \fixes_{\vec{a}} (X=
x)$. Applying Lemma \ref{lem:extendVars}, we obtain 
\begin{equation}
\label{eq:x_fixed_yprime}
\d[\svc{Z}, \vec{Y}' \leftarrow \vec{y}^{\, \prime}] \fixes_{\vec{a}} (X= x).
\end{equation}
%joe1
%So, we have
Thus,
\begin{align*}
&\textbf{if } \vecc{a}  \textbf{ then } \d [\svc{Z}, \vec{Y}' \leftarrow \vec{y}^{\, \prime},X \leftarrow x] \\
\sim \quad &\textbf{if } \vecc{a}  \textbf{ then } \d [\svc{Z},
  \vec{Y}' \leftarrow \vec{y}^{\, \prime}] 
%evanIJCAI: cut for space, added below
% && \text{(from
% %joe1
%   %  \eqref{eq:x_fixed_yprime})} \\
%     \eqref{eq:x_fixed_yprime} and the definition of $\fixes_{\vec{a}}$)} 
\\ 
\sim \quad 
&\textbf{if } \vecc{a}  \textbf{ then } \d [\vec{Y}
%joe1
  %  \leftarrow \vec{y}] && \text{(from \eqref{eq:induct_hypo2})}
    \leftarrow \vec{y}], 
    %evanIJCAI: cut for space, added below
 %&& \text{(from \eqref{eq:induct_hypo2})} 
\end{align*}
where the first indifference comes from from \eqref{eq:x_fixed_yprime}
and the definition of $\fixes_{\vec{a}}$), 
and the second from \eqref{eq:induct_hypo2}.
  \end{lproof}

\begin{lproof}{lem:biequiv}
Let $Y_1, \ldots, Y_k$ be the ordering on endogenous variables
according to $\dot{\leadsto}$. 
We define $\F^C$ for the endogenous variables by induction $j$ so that
(\ref{eq:equivmodel1}) for all formulas $Y_j = y_j$.

For the base case, note that the value of the variable $Y_1$ does not
depend on the 
values of any other endogenous variable.  If is straightforward to
define $F_{Y_1}$ so that $F_{Y_1}(\vec{u}) = y_1$ iff
$a_{\vec{u}} \Rightarrow Y_1=y_1$.  It easily follows that 
$(M^C,\u) \models Y_1 = y_1$ iff 
$(M,\langle  \vec{a}_{\u}, \vec{a}_{\u} \rangle) \models Y_1 = y_1$.
Moreover, if $Y_1 \notin \vec{Y}$, then 
$(M^C_{\d[\svc{Y}]},\u) \models Y_1 = y_1$ iff
$(M^C,\u) \models Y_1 = y_1$.  
By Lemma~\ref{lem:extendVars} (taking $\vec{Z} = \emptyset$ and $X = Y_1$)
%(M,\textsc{min}_{\langle \vec{a}_{\u}, \vec{a}_{\u}
%  \rangle}(\d[\svc{Y}])) \models \phi.  
it follows that $(M,\textsc{min}_{\langle \vec{a}_{\u}, \vec{a}_{\u}
  \rangle}(\d[\svc{Y}])) \models Y_1 = y_1$ iff $(M,\langle \vec{a}_{\u},
\vec{a}_{\u}) \models Y_1 = y_1$.  On the other hand, if $Y_1 \in
\vec{Y}$, then $(M^C_{\d[\svc{Y}]},\u) \models Y_1 = y_1$ iff
$\vec{y}\mid_{Y_1} = y_1$ and similarly, $(M,\textsc{min}_{\langle \vec{a}_{\u}, \vec{a}_{\u}
  \rangle}(\d[\svc{Y}])) \models Y_1 = y_1$ iff  $\vec{y}\mid_{Y_1} = y_1$.
Thus, we get the desired result in the base case.

For the inductive case, suppose that we have proved the result for
$Y_1, \ldots, Y_j$.  We can define $F_{Y_{j+1}}$ to be function of the
  exogenous variables and $Y_1, \ldots, Y_j$ such that for all $k \in
  \{1,\ldots, j\}$, and $y_i \in \Rn(Y_i)$ for $i = 1,\ldots, j$, we
  have that $F_{Y_{j+1}}(\vec{u}, y_1, \ldots, y_j) = y_{k+1}$ iff
$(M,\textsc{min}_{\langle \vec{a}_{\u}, \vec{a}_{\u}
  \rangle}(\d[Y_1 \gets y_1, \ldots, Y_k \gets y_k]) \models Y_{k+1}
= y_{k+1}$.  Now for arbitrary $\vec{Y}$, if $Y_{j+1} \notin \vec{Y}$,
let $\vec{Y}' = \vec{Y} \cap
\{Y_1, \ldots, Y_j\}$.  Let $\vec{y}' = \vec{y}|_{\vec{Y}'}$.  Writing
$\vec{Y}''$ for $\{Y_1, \ldots, Y_j\}$, let $\vec{y}''$ be such that
$(M^C_{\d[\vec{Y}' \gets \vec{y}']},\u) \models \vec{Y}'' =
\vec{y}''$.  Then it follows from the induction hypothesis that 
 $(M,\textsc{min}_{\langle \vec{a}_{\u}, \vec{a}_{\u}
  \rangle}(\d[\vec{Y}' \gets \vec{y}'])) \models \vec{Y}'' =
\vec{y}''$.
Moreover, it easily follows that $(M^C_{\d[\svc{Y}]},\u) \models
Y_{j+1} = y_{j+1}$ iff $(M^C_{\d[\vec{Y}'' \gets\vec{y}'']},\u) \models
Y_{j+1} = y_{j+1}$ and 
 $(M,\textsc{min}_{\langle \vec{a}_{\u}, \vec{a}_{\u}
  \rangle}(\d[\vec{Y} \gets \vec{y}])) \models Y_{j+1} =
y_{j+1}$ iff 
 $(M,\textsc{min}_{\langle \vec{a}_{\u}, \vec{a}_{\u}
  \rangle}(\d[\vec{Y}'' \gets \vec{y}''])) \models Y_{j+1} =
y_{j+1}$.  Since, by the definition of $F_{Y_{j+1}}$,
$(M^C_{\d[\vec{Y}'' \gets\vec{y}'']},\u) \models Y_{j+1} = y_{j+1}$ 
iff
 $(M,\textsc{min}_{\langle \vec{a}_{\u}, \vec{a}_{\u}
  \rangle}(\d[\vec{Y}'' \gets \vec{y}''])) \models Y_{j+1} =
y_{j+1}$, the result easily follows.  The argument if $Y_{j+1} \in
\vec{Y}$ is the same as in the base 
case.  
%evan3: not need, since in proof env
%This completes the inductive proof.

The result for arbitrary
formulas $\phi \in \L(\S)$ is immediate (since conjunction,
disjunction, and negation work the same way in both $M$ and $M^C$).
%evan3
\end{lproof}

\subsection{Uniqueness}

%joe4
%Consider the following strengthening A\ref{ax:def}:
Consider the following strengthening of A\ref{ax:def}:
\addtocounter{ax}{-3}
\begin{axprime}[Strong Definiteness]
\label{ax:defp}
For each non-null atom $\vec{a}$, vector of $\vec{Y}$ of endogenous variables,
$\vec{y} \in \Rn(\vec{Y})$,
and endogenous variable $X \notin \vec{Y}$, there
exists a \emph{unique} $x \in \Rn(X)$ such that $\d[\svc{Y}] \fixes_{\vec{a}} (X = x)$. 
\end{axprime}

%joe5
%Strengthening A\ref{ax:def} to A\ref{ax:defp}\,$^*$, is both necessary
%and sufficient to obviate the multiplicity of representations. 
Strengthening A\ref{ax:def} to A\ref{ax:defp}$^*$, is both necessary
and sufficient to avoid multiple representations.

\begin{definition}
%joe5
%  Let $(M,\p,\mu)$ be a a subjective causal utility representation of
%$\s_\S$. Then say that $M$ is \emph{identified} if for any other
%$(M',\p',\mu')$ and any non-null $\u$
  %representation
  A subjective causal utility representation of a preference relation of
%$\s_\S$ is \emph{identified} if, for all representations
$(M',\p',\mu')$, non-null contexts $\u$,  and formulas $\phi \in \L^+(\S)$,
$$ (M,\u) \models \phi \qquad \text{ if and only if} \qquad (M',\u)
  \models \phi.$$ 
%joe5
%  for all $\phi \in \L^+(\S)$.
\end{definition}

\begin{theorem}
\label{thm:uniq}
%joe5
%Let $(M,\p,\mu)$ be a a subjective causal utility representation of
%$\s_\S$. Then $M$ is \emph{identified} if and only if $\s_\S$ satisfies
A subjective causal utility representation $(M,\p,\mu)$ of
$\s_\S$ is \emph{identified} if and only if $\s_\S$ satisfies 
Axiom A\ref{ax:defp}$^*$.
\end{theorem}

\begin{proof}
%evan4*
%joe5
%Towards Theorem \ref{thm:uniq}, the uniqueness claim: Let $(M,
To prove Theorem \ref{thm:uniq}, the uniqueness claim, let $(M,
\p,\mu)$ and $(M', \p',\mu')$ be two representations of $\s_\S$.  
%joe5
%Notice for all $\u \in \C^\dag$ and all vectors $\vec{Y}$, $\vec{y}
For all $\u \in \C^\dag$ and all vectors $\vec{Y}$, $\vec{y}
\in \Rn(\vec{Y})$, 
endogenous variables $X \notin \vec{Y}$, and $x \in \Rn(X)$
\begin{equation}
\label{eq:uniq1}
%evanIJCAI, elimited quad spacing
(M_{\d[\svc{Y}]},\u) \models (X = x) 
%\quad 
\implies 
%\quad 
\d[\svc{Y}]
%joe5
%\fixes_{\vec{a}_{\u}} (X = x)
\fixes_{\vec{a}_{\u}} (X = x); 
\end{equation}
%joe5
%and likewise,
similarly,
\begin{equation}
\label{eq:uniq2}
(M'_{\d[\svc{Y}]},\u) \models (X = x) 
%\quad 
\implies 
%\quad 
\d[\svc{Y}]
%joe5
%\fixes_{\vec{a}_{\u}} (X = x)
\fixes_{\vec{a}_{\u}} (X = x). 
\end{equation}

%joe5
%Now, let $\s_\S$ satisfy A\ref{ax:defp}$^*$. Since by
%A\ref{ax:defp}$^*$ there is an unique $x \in \Rn(X)$ that satisfies
%right hand relations of \eqref{eq:uniq1} and \eqref{eq:uniq2}, we have
Suppose that $\s_\S$ satisfies A\ref{ax:defp}$^*$. Then
there is a unique $x \in \Rn(X)$ that satisfies
the  
right-hand relations of \eqref{eq:uniq1} and \eqref{eq:uniq2}, so we have
$$
(M_{\d[\svc{Y}]},\u) \models (X = x)  
%\quad 
\iff 
%\quad
%joe5
%(M'_{\d[\svc{Y}]},\u) \models (X = x) $$
(M'_{\d[\svc{Y}]},\u) \models (X = x). $$ 
%joe5
%or that $M$ and $M'$ coincide over $\C^\dag$.
It easily follows that $M$ and $M'$ agree on all formulas in
$\L^+(\S)$, so $M$ is identifiable.

%joe5
%Now, assume that $M \neq M'$ for some $\u \in \C^\dag$. In particular
For the converse, suppose that $M \neq M'$ for some $\u \in
\C^\dag$. In particular, 
this requires that there exists some vector $\vec{Y}$, $\vec{y} \in
\Rn(\vec{Y})$ and endogenous variable $X \notin \vec{Y}$, such that
$(M_{\d[\svc{Y}]},\u) \models (X = x)$ and $(M'_{\d[\svc{Y}]},\u)
\models (X = x')$ and $x \neq x'$. But then, by \eqref{eq:uniq1} and
\eqref{eq:uniq2}, $\s_\S$ violates A\ref{ax:defp}$^*$. 
\end{proof}
}
\end{document}